\theoremstyle{plain}
\newtheorem{theorem}{Theorem}[section]
\newtheorem{lemma}[theorem]{Lemma}
\newtheorem{proposition}[theorem]{Proposition}
\newtheorem{corollary}[theorem]{Corollary}
\newtheorem{example}{Example}[section]
\newtheorem{examples}{Example}[subsection]
\newtheorem{remark}{Remark}[section]
\theoremstyle{definition}
\newtheorem{definition}{Definition}[section]
\numberwithin{equation}{section} 
\DeclareMathOperator{\tr}{tr}
\DeclareMathOperator{\cyc}{cyc}
\newcommand{\id}{\mathrm{Id}}
\DeclareMathOperator{\GL}{GL}
\DeclareMathOperator{\Li}{Li}
\DeclareMathOperator{\aut}{aut}
\DeclarePairedDelimiter{\abs}{|}{|}
\DeclarePairedDelimiter{\bra}{\langle}{|}
\DeclarePairedDelimiter{\ket}{|}{\rangle}
\DeclarePairedDelimiter{\no}{:}{:}
\DeclarePairedDelimiter{\expectation}{\langle}{\rangle}
\def\be{\begin{equation}}
\def\ee{\end{equation}}
\def\bea{\begin{eqnarray}}
\def\eea{\end{eqnarray}}
\def\bt{\begin{theorem}}
\def\et{\end{theorem}}
\def\bex{\begin{example}\small \rm}
\def\eex{\end{example}}
\def\bexs{\begin{examples}\small \rm}
\def\eexs{\end{examples}}
\def\ra{\rightarrow}
\def\ss{\subset}
\def\deq{\coloneqq}
\def\br{\begin{remark}\small \rm}
\def\er{\end{remark}}
\def\&{&{\hskip -20pt}}
\def\FF{\mathcal{F}}
\def\JJ{\mathcal{J}}
\def\ZZ{\mathcal{Z}}
\def\Cb{\mathbf{C}}
\def\Nb{\mathbf{N}}
\def\Pb{\mathbf{P}}
\def\Zb{\mathbf{Z}}
\def\grF{\mathfrak{F}}
\begin{document}
\baselineskip 16pt
\medskip
\begin{center}
\begin{Large}\fontfamily{cmss}
\fontsize{17pt}{27pt}
\selectfont
\textbf{Generating functions for weighted Hurwitz numbers}\footnote{Work supported by the Natural Sciences and Engineering Research Council of Canada (NSERC) and the Fonds de recherche du Qu\'ebec -- Nature et technologies (FRQNT).}
\end{Large}\\
\bigskip
\begin{large} {Mathieu Guay-Paquet}$^{1}$ and {J. Harnad}$^{2,3}$
\end{large}
\\
\bigskip
\begin{small}
$^{1}${\em Universit\'e du Qu\'ebec \`a Montr\'eal\\
201 Av du Pr\'esident-Kennedy,
Montr\'eal, QC, Canada H2X~3Y7 \\
email: mathieu.guaypaquet@lacim.ca} \\
\smallskip
$^{2}${\em Centre de recherches math\'ematiques,
Universit\'e de Montr\'eal\\ C.~P.~6128, succ. centre ville,
Montr\'eal,
QC, Canada H3C 3J7 \\ e-mail: harnad@crm.umontreal.ca} \\
\smallskip
$^{3}${\em Department of Mathematics and
Statistics, Concordia University\\ 7141 Sherbrooke W.,
Montr\'eal, QC
Canada H4B 1R6} \\
\end{small}
\end{center}
\bigskip

\begin{abstract}
Double Hurwitz numbers enumerating weighted $n$-sheeted branched coverings of the Riemann sphere  or, equivalently, weighted paths in the Cayley graph of $S_n$ generated by transpositions, are determined by an associated weight generating function. A uniquely determined  $1$-parameter family of 2D Toda $\tau$-functions of hypergeometric type is shown to consist of  generating functions for such weighted Hurwitz numbers.  Four  classical cases are detailed, in which the weighting is uniform:  Okounkov's double  Hurwitz numbers, for which the ramification is simple at all but two specified branch points; the case of Belyi curves, with three branch points, two with specified profiles; the general case, with a specified number of branch points, two with fixed profiles, the rest  constrained  only by the genus;  and the signed enumeration case, with sign determined by the parity of the number of branch points.  Using the exponentiated quantum dilogarithm function as weight generator, three new types  of weighted enumerations are introduced. These determine  {\em quantum} Hurwitz numbers depending on a  deformation parameter $q$.  By suitable interpretation of  $q$, the  statistical mechanics of  quantum weighted  branched covers may be related to that of Bosonic gases.  The standard double Hurwitz numbers are recovered  in the classical limit.
\end{abstract}
\break 

\section{Introduction: weighted branched coverings, paths and $\tau$-functions}

\subsection{Toda \texorpdfstring{$\tau$}{tau}-functions as generating functions for Hurwitz numbers}

In \cite{GH, HO2} a method was developed for constructing parametric families of 2D Toda $\tau$-functions
\cite{Ta, UTa, Takeb} of hypergeometric type \cite{KMMM, OrS} that serve as generating functions for Hurwitz numbers, which count various configurations of branched coverings of the Riemann sphere. A natural combinatorial construction was shown to yield an equivalent interpretation in terms of path-counting in the Cayley graph of the symmetric group $S_n$ generated by transpositions. All previously known cases \cite{P, Ok, GGN1, GGN2,  Z,  BEMS,  AC1, AMMN, KZ} were placed within this framework and several new examples were deduced and explained,  both from the enumerative geometric and the combinatorial viewpoint.  (For an overview of currently known cases, including the classes of examples introduced in this work, see ref. \cite{H2}.)

In the present work, this approach is extended to  general one-parameter families of
2D Toda $\tau$-functions of hypergeometric type,  determined by an associated weight generating function
$G(z)$, together with its dual  $\tilde{G}(z)={1/G(-z)}$.  The resulting $\tau$-functions  may be interpreted as generating functions for various types of weighted enumerations of branched covers of the Riemann sphere or, equivalently, weighted paths in the Cayley graph. By suitably choosing $G(z)$, it is straightforward to recover all previously studied  cases \cite{P, Ok, GGN1, GGN2, GH, HO2, Z, BEMS, AC1, AC2, AMMN, KZ} and add an infinite variety of new ones.

In this setting, the double Hurwitz numbers  of Okounkov \cite{Ok},  which count covers that include  a pair of branch points with specified ramification profiles $\mu$ and $\nu$, together with a number $d$ of additional simple  branch points, correspond to the exponential weight generating function $G(z)=  e^z $. The associated combinatorial problem consists of counting $d$-step paths in the Cayley graph of $S_n$ starting from an element of the conjugacy class $\cyc(\nu)$ with cycle lengths given by  the partition $\nu$ and ending in the class $\cyc(\mu)$. The counting of Belyi coverings \cite{Z, KZ, AC1} of fixed genus, having again a pair of branch points with specified ramification type, plus a third whose profile length is determined by the genus, was shown in \cite{GH, HO2} to be equivalent to counting  paths in the Cayley graph that are strictly monotonically increasing in the  second elements of the successive transpositions.  This case corresponds to the weight generating function $G(z)=1+z$.  The counting of coverings with $k+2$ branch points, two of which again have specified ramification profiles $\mu$ and $\nu$, while the remaining $k$ are constrained  to provide a  specified genus $g$,  corresponds to $G(z)= (1+z)^k$. Combinatorially, this counts $d= 2g -2 +\ell(\mu) + \ell(\mu)$ step paths in the Cayley graph, starting at an element of  $\cyc(\nu)$ and ending in $\cyc(\mu)$, consisting of $k$ subsequences each of which is strictly monotonically increasing in the second elements of successive transpositions. The signed counting of branched covers, again with two ramification profiles specified, plus an arbitrary number of further branch points constrained to provide the fixed genus, was shown to correspond to counting weakly monotonic paths \cite{GGN1, GGN2, GH, HO2}. The weight generating function in this case is $G(z)=(1-z)^{-1}$, so this is the dual of the Belyi curve case.

  All these examples may be viewed as special ``classical'' cases of the more general family of weighted coverings, and associated weighted paths,
characterized by the fact that the weights are uniform (i.e., constant over the class of permissible ramification profiles), possibly up to a sign determined by the parity of the number branch points.

For a general weight generating function $G(z)$, the weights assigned  to configurations of branched coverings 
of the Riemann sphere  or to paths in the Cayley graph generated by transpositions  are determined  by the 
coefficients of the Taylor expansion  of $G(z)$ about the origin or, alternatively, by 
the coefficients in a finite or infinite product expansion of the form  $G(z) = \prod_{i=1}^\infty (1+ c_i z)$.
The algebra $\Lambda$ of symmetric  functions in an arbitrary number of indeterminates \cite{Mac}  turns out 
to be fundamental in the analysis.  

Remarkably, all six  standard bases  of $\Lambda$: $\{s_\lambda\}, \{p_\lambda\}, \{e_\lambda\}, \{h_\lambda\},
 \{m_\lambda\}$ and $\{f_\lambda\}$, labelled by integer partitions $\lambda$,  turn out to play an essential r\^ole.  
The  Schur functions $\{s_\lambda\}$, and diagonal tensor products of these, serve as expansion bases 
for the KP or 2D Toda $\tau$-functions of hypergeometric type \cite{KMMM, OrS},  in which the coefficients
are of {\it content product} form \cite{GH, HO1}.
 The coefficients in the basis  $\{p_\lambda\}$ of power sum symmetric functions, or tensor products of these,
turn out to be the  weighted Hurwitz numbers.  Provided these series  are  uniformly convergent on some
domain of the expansion parameters, they can be shown  \cite{Ta, UTa, Takeb} 
to satisfy an infinite system of Hirota bilinear equations, implying an infinite set of quadratic relations  
between the coefficients,  equivalent to the Pl\"ucker relations for an infinite dimensional Grassmann manifold.
The bases $\{e_\lambda\}$, $\{h_\lambda\}$ formed from  products of the elementary and complete 
symmetric functions, evaluated at the set of parameters $(c_1, c_2, \dots)$ appearing in the weight generating
function $G(z)$ and its dual  $\tilde{G}(z)$,  provide the path weights in the Cayley graph of $S_n$ generated by the transpositions. The monomial sum and ``forgotten'' \cite{Mac} symmetric functions   $\{m_\lambda\}$ and $\{f_\lambda\}$ similarly provide theweights of the branched coverings. The equality between the combinatorial and geometric definitions  of the weighted Hurwitz numbers follows from the various forms of the Cauchy-Littlewood
identity \cite{Mac} and its dual (see  \autoref{CL_approach}).

 A  class of examples of special interest appears when the  quantum dilogarithm function  is used to define the weight 
 generating function $G(z)$. This leads to the notion of $q$-deformed, or {\it quantum} Hurwitz numbers. In Sections \ref{examples} and \ref{quantum_hurwitz}, three variants are studied, which may be seen as $q$-deformations of the previously considered generating functions for strictly and weakly monotonic path counting. The classical limit is shown to reproduce the double Hurwitz numbers ${\rm Cov}_d(\mu, \nu)$ of \cite{Ok}. 

In the general setting, the number of branch points may be viewed as a random variable, as can the  Hurwitz numbers themselves. In the special case of quantum Hurwitz numbers, the state space is identifiable with that of a Bosonic gas with linearly spaced energy eigenvalues and fixed total energy. If the energy is taken as proportional to the degree of degeneration of the covering over the various branch points, fixing the total energy corresponds to fixing the genus of the covering curve or, equivalently, the number of steps in the Cayley graph.


\subsection{Weight generating functions and weighted Hurwitz numbers}

The use of one parameter families of 2D Toda $\tau$-functions of hypergeometric type as generating functions for  weighted branched coverings or weighted paths in the Cayley graph will be developed in detail in \autoref{parametric_families}. In the first setting, we introduce a weight generating function $G(z)$ and its dual $\tilde{G}(z)$ in the form of  infinite products
\bea
G(z) &\&:= \prod_{k=1}^\infty (1+ c_k z) \\
\tilde{G}(z) &\& := \prod_{k=1}^\infty (1- c_k z)^{-1}
\label{G_inf_prod}
\eea
Define the {\it colength} of a partition $\mu$ to be the difference between its length  and its weight 
\be
\ell^*(\mu) := |\mu|-  \ell(\mu).
\ee
The Euler characteristic of a connected $n$-sheeted branched cover of the Riemann sphere with $k+2$  branch points
with ramification profiles given by the partitions $(\mu^{(1)}, \dots, \mu^{(k)}, \mu, \nu)$ is given by
the Riemann-Hurwitz formula
\be
2- 2g= \ell(\mu) + \ell(\nu) - d, 
\label{riemann_hurwitz}
\ee
where $g$ is the genus and $d$ the sum of the colengths
\be
d  = \sum_{i=1}^k \ell^*(\mu^{(i)}).
\ee
The weights attached to a branched covering having two specified branch points with ramification
profiles of type $\mu$ and $\nu$, and $k$  additional branch points with ramification profiles 
$(\mu^{1)}, \dots, \mu^{(k)})$ are defined to be 
\bea
W_G(\mu^{(1)}, \dots, \mu^{(k)}) &\& := m_\lambda ({\bf c}) =
\frac{1}{\abs{\aut(\lambda)}} \sum_{\sigma\in S_k} \sum_{1 \le i_1 < \cdots < i_k}
 c_{i_\sigma(1)}^{\ell^*(\mu^{(1)})} \cdots c_{i_\sigma(k)}^{\ell^*(\mu^{(k)})},
\label{W_G_def}
\\\
W_{\tilde{G}}(\mu^{(1)}, \dots, \mu^{(k)}) &\& :=
f_\lambda ({\bf c})=
\frac{(-1)^{\ell^*(\lambda)}}{\abs{\aut(\lambda)}} \sum_{\sigma\in S_k} \sum_{1 \le i_1 \le \cdots \le i_k} 
c_{i_\sigma(1)}^{\ell^*(\mu^{(1)})},  \cdots c_{i_\sigma(k)}^{\ell^*(\mu^{(k)})},
\label{W_G_tilde_def}
\eea
where $m_\lambda({\bf c})$ and $f_\lambda({\bf c})$ are, respectively, the monomial sum 
 and ``forgotten'' symmetric functions \cite{Mac}  in the variables ${\bf c} = (c_1, c_2, \dots )$,
corresponding to the partition $\lambda$ of length $k$ with parts 
$(\lambda_1, \dots,\lambda_k)$ equal to  the colengths $(\ell^*(\mu^{(1)}), \dots, \ell^*(\mu^{(k)}))$,
arranged in weakly decreasing order, and $\abs{\aut(\lambda)}$ is the order of the automorphism group, under $S_n$ of the partition $\lambda$:
\be
\abs{\aut(\lambda)} := \prod_{i=1}^{\ell(\lambda} (m_i(\lambda))!,
\ee
where $m_i(\lambda)$ is the number of times $i$ appears as a part of $\lambda$.
The weighted numbers of such $n$-sheeted branched coverings of the Riemann sphere, having two specified
branch points with ramification profiles  $\mu$ and $\nu$  and any number $k$ of further
ones, with arbitrary ramification profiles $(\mu^{(1)}, \dots, \mu^{(k)})$,  are defined to be
  \bea
H^d_G(\mu, \nu) &\&\deq \sum_{k=0}^\infty \sideset{}{'}\sum_{\substack{\mu^{(1)}, \dots \mu^{(k)} \\ \sum_{i=1}^k \ell^*(\mu^{(i)})= d}}
W_{G}(\mu^{(1)}, \dots, \mu^{(k)}) H(\mu^{(1)}, \dots, \mu^{(k)}, \mu, \nu) 
\label{Hd_G}
\\
H^d_{\tilde{G}}(\mu, \nu) &\&\deq \sum_{k=0}^\infty \sideset{}{'}\sum_{\substack{\mu^{(1)}, \dots \mu^{(k)} \\ \sum_{i=1}^d \ell^*(\mu^{(i)})= d}}
W_{\tilde{G}}(\mu^{(1)}, \dots, \mu^{(k)}) H(\mu^{(1)}, \dots, \mu^{(k)}, \mu, \nu), 
\label{Hd_tildeG}
\eea
where $H(\mu^{1)}, \dots, \mu^{(k)}, \mu, \nu)$ is the number of
inequivalent  $n$-sheeted branched coverings of the Riemann sphere  (not necessarily connected) 
having $k+2$ branch points with profiles  $(\mu^{1)}, \dots, \mu^{(k)}, \mu, \nu)$
weighted by the inverse of the order of their automorphism groups, 
and $\sum'$ denotes the sum over all partitions other than the cycle type of the identity element.

For any $N\in \Zb$ and any partition $\lambda$, choosing a nonvanishing parameter $\beta$, we define the {\it content product}
\bea
r_\lambda^{(G, \beta)}(N) &\&\deq r_0^{(G, \beta)}(N) \prod_{(i,j)\in \lambda} G(\beta(N+ j-i)),
\label{r_lambda_G}
 \\
 r_\lambda^{{(\tilde{G}, \beta)}}(N) &\&\deq r^{(\tilde{G}, \beta)}_0(N) \prod_{(i,j)\in \lambda} \tilde{G}(\beta(N+ j-i)),
 \label{r_lambda_G_tilde}
 \eea
where
\bea
r_0^{(G, \beta)}(N) := \prod_{j=1}^{N-1} G((N-j)\beta)^j, \quad r_0^{(G, \beta)}(0) &\& := 1, \quad r_0^{G, \beta}(-N) := \prod_{j=1}^{N} G((j-N)\beta)^{-j},
\quad N\geq 1. \cr
&\&
\\
r^{(\tilde{G}, \beta)}_0(N): = \prod_{j=1}^{N-1}\tilde{G}((N-j)\beta)^j, \quad r^{(\tilde{G}, \beta)}_0(0) &\& = 1,\
 \quad r^{(\tilde{G}, \beta)}_0(-N) := \prod_{j=1}^{N} \tilde{G}((j-N)\beta)^{-j},
\quad N\geq 1. \cr
&\&
\label{content_product_tildeG_N}
\eea

These coefficients determine a pair of $2D$ Toda $\tau$-functions $\tau^{(G, \beta)}({\bf t}, {\bf s})$, $\tau^{(\tilde{G}, \beta)}({\bf t}, {\bf s})$ of hypergeometric type \cite{OrS, GH},  defined by their (diagonal) double Schur function expansions:
\bea
\tau^{(G, \beta)}(N, {\bf t}, {\bf s}) &\&:= \sum_{\lambda} r_\lambda^{(G, \beta)}(N) s_\lambda({\bf t}) s_\lambda({\bf s}),
\label{tau_G_double_schur}
\\
\tau^{(\tilde{G}, \beta)}(N, {\bf t}, {\bf s}) &\&:= \sum_{\lambda} r_\lambda^{(\tilde{G}, \beta)}(N) s_\lambda({\bf t}) s_\lambda({\bf s}),
\label{tau_G_tilde_double_schur}
\eea
where
\be
{\bf t} = (t_1, t_2, \dots), \quad {\bf s} = (s_1, s_2, \dots)
\ee
are the 2D Toda flow variables, which may be identified in this notation
with the power sums 
\be
t_i = \frac{p_i}{i}, \quad s_i = \frac{p'_i}{i}
\ee
in two independent sets of variables.
(See \cite{Mac} for notation and further definitions involving symmetric functions.)

The first main result  is :
 \begin{theorem}
\label{generating_function_weighted_coverings}
The functions
\be
\tau^{(G, \beta)}({\bf t}, {\bf s}) := \tau^{(G, \beta)}(0, {\bf t}, {\bf s}), \quad
\tau^{(\tilde{G}, \beta)}( {\bf t}, {\bf s}) :=\tau^{(\tilde{G}, \beta)}(0, {\bf t}, {\bf s}),
\ee
 when expanded in the basis of tensor products of pairs of  power sum symmetric functions $\{p_\mu\}$,
 are generating functions for the weighted  double Hurwitz numbers $H_G^d(\mu, \nu) $  and $H_{\tilde{G}} ^d(\mu, \nu) $ of 
$n$-sheeted branched coverings of the Riemann sphere with genus  $g$ given by \eqref{riemann_hurwitz}.
\bea
\tau^{(G, \beta)} ({\bf t}, {\bf s})
&\&= \sum_{d=0}^\infty \sum_{\substack{\mu, \nu, \\ \abs{\mu} = \abs{\nu}}} \beta^d H^d_G(\mu, \nu) p_\mu({\bf t}) p_\nu({\bf s}),
\label{tau_G_H}
\\
\tau^{(\tilde{G}, \beta)} ({\bf t}, {\bf s})
&\&= \sum_{d=0}^\infty \sum_{\substack{\mu, \nu, \\ \abs{\mu} = \abs{\nu}}} \beta^d H^d_{\tilde{G}}(\mu, \nu) p_\mu({\bf t}) p_\nu({\bf s}),
\label{tau_G_tilde_H}. 
\eea
\end{theorem}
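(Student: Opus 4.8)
The plan is to convert the defining double Schur expansion of $\tau^{G(z)}$ into the power-sum basis by means of the Frobenius character formula, and then to identify the resulting coefficients with the weighted Hurwitz numbers by reading the content products $r^{G(z)}_\lambda(0)$ as eigenvalues of a central element of $\Cbb[S_n]$. Since both sides of \eqref{tau_G_H}--\eqref{tau_G_tilde_H} decompose according to total degree --- the coefficient of $p_\mu({\bf t})\,p_\nu({\bf s})$ vanishes unless $\abs{\mu}=\abs{\nu}$, and $s_\lambda$ contributes only in degree $\abs{\lambda}$ --- it suffices to work at a fixed $n$. Substituting $s_\lambda = \sum_{\mu}z_\mu^{-1}\chi_\lambda(\mu)\,p_\mu$ into \eqref{tau_G_double_schur} reduces \eqref{tau_G_H} to the numerical identity
\be
\sum_{d\ge 0} z^d\, H^d_G(\mu,\nu) \;=\; \frac{1}{z_\mu\, z_\nu}\sum_{\lambda\vdash n} \chi_\lambda(\mu)\,\chi_\lambda(\nu)\, r^{G(z)}_\lambda(0)
\ee
for all $\mu,\nu\vdash n$, and likewise reduces \eqref{tau_G_tilde_H} to the analogous identity with $\tilde G$.

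Next I would rewrite $r^{G(z)}_\lambda(0)=\prod_{(i,j)\in\lambda}G(z(j-i))$ as an eigenvalue. Since $G(z)=\prod_k(1+c_k z)$,
\be
r^{G(z)}_\lambda(0)=\prod_k \prod_{(i,j)\in\lambda}\bigl(1+c_k z\,(j-i)\bigr),
\ee
which is the scalar by which the central element $\prod_{a=1}^n G(zJ_a)$ acts on the irreducible $S_n$-module $V_\lambda$, where $J_a=\sum_{b<a}(b\,a)$ are the Jucys--Murphy elements; this uses the standard fact that a symmetric function of $J_1,\dots,J_n$ acts on $V_\lambda$ by the same symmetric function evaluated at the contents $\{\,j-i:(i,j)\in\lambda\,\}$. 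I would then expand this central element in the basis of conjugacy-class sums $C_\alpha$, starting from the elementary identity $\prod_{a=1}^n(1+uJ_a)=\sum_{\alpha\vdash n} u^{\ell^*(\alpha)}C_\alpha$ (equivalently $e_b(J_1,\dots,J_n)=\sum_{\alpha:\,\ell^*(\alpha)=b}C_\alpha$), then taking the product over $k$ and regrouping the monomials in ${\bf c}$. The outcome is an identity of the schematic form
\be
\prod_{a=1}^n G(zJ_a)\;=\;\sum_{k\ge0}\ \sideset{}{'}\sum_{\mu^{(1)},\dots,\mu^{(k)}} z^{\sum_i \ell^*(\mu^{(i)})}\, W_G(\mu^{(1)},\dots,\mu^{(k)})\; C_{\mu^{(1)}}\cdots C_{\mu^{(k)}},
\ee
in which the monomial symmetric functions $m_\lambda({\bf c})$ of \eqref{W_G_def} appear precisely through the regrouping of $\prod_k\prod_a(1+c_kzJ_a)$. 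For the dual generating function one replaces $\prod_a(1+uJ_a)$ by $\prod_a(1-uJ_a)^{-1}=\sum_b u^b\,h_b(J_1,\dots,J_n)$, whose class-expansion coefficients count weakly monotone transposition factorizations and reproduce the ``forgotten'' symmetric functions $f_\lambda({\bf c})$ of \eqref{W_G_tilde_def}.

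It then remains to combine this with the classical Frobenius formula for the (possibly disconnected, automorphism-weighted) Hurwitz numbers,
\be
H(\mu^{(1)},\dots,\mu^{(k)},\mu,\nu)\;=\;\frac{1}{z_\mu\, z_\nu}\sum_{\lambda\vdash n}\chi_\lambda(\mu)\,\chi_\lambda(\nu)\ \prod_{i=1}^k \xi_\lambda(\mu^{(i)}),
\ee
where $\xi_\lambda(\alpha)=\abs{C_\alpha}\,\chi_\lambda(\alpha)/\dim V_\lambda$ is the eigenvalue of $C_\alpha$ on $V_\lambda$. Multiplying by $W_G$, summing over $k$ and over the admissible profiles with $\sum_i\ell^*(\mu^{(i)})=d$, and invoking the central-element identity of the previous paragraph, the left side $\sum_d z^d H^d_G(\mu,\nu)$ collapses to $\frac{1}{z_\mu z_\nu}\sum_\lambda\chi_\lambda(\mu)\chi_\lambda(\nu)$ times the eigenvalue of $\prod_a G(zJ_a)$ on $V_\lambda$, namely $r^{G(z)}_\lambda(0)$ --- which is exactly the identity displayed above. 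The $\tilde G$ case is identical, with $h_b$ and $f_\lambda$ in place of $e_b$ and $m_\lambda$, the sign factor in $f_\lambda({\bf c})$ reflecting the passage from $G(z)$ to its dual $\tilde G(z)=1/G(-z)$.

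The main obstacle I expect is the bookkeeping in the central-element identity: tracking how, after expanding $\prod_k\prod_a(1+c_kzJ_a)$ and regrouping the resulting terms by the conjugacy classes they produce, the partial symmetric sums $\sum_{k_1<\dots<k_r}\prod_j c_{k_j}^{\ell^*(\mu^{(j)})}$ reassemble into the monomial symmetric functions $m_\lambda({\bf c})$ --- with the correct automorphism normalizations when profiles, or merely their colengths, coincide --- and similarly on the $h_b$/$f_\lambda$ side. This is precisely the equality between the combinatorial and geometric definitions of the weighted Hurwitz numbers, established through the various forms of the Cauchy--Littlewood identity in \autoref{CL_approach}; it also subsumes the classical special cases $G(z)=e^z,\,1+z,\,(1+z)^k,\,(1-z)^{-1}$ via the known correspondences between (strictly or weakly) monotone path counting in the Cayley graph and the corresponding branched-cover enumerations. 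No convergence question arises, since at each fixed degree $n$ all identities hold coefficientwise in the formal parameters $z,{\bf t},{\bf s}$.
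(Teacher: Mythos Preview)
Your argument is correct, but it is not the route the paper takes for this theorem. The paper's proof in \autoref{weighted_geometric_expansions} never invokes Jucys--Murphy elements or the class-sum identity $e_b(J_1,\dots,J_n)=\sum_{\ell^*(\alpha)=b}C_\alpha$. Instead it proceeds directly from the content product: writing $r^{G}_\lambda(z)=\prod_k (c_k z)^{|\lambda|}(1/(c_kz))_\lambda$, it applies the character expansion of the Pochhammer symbol (Lemma~2.1), $(u)_\lambda=1+h(\lambda)\sum'_\mu z_\mu^{-1}\chi_\lambda(\mu)\,u^{-\ell^*(\mu)}$, expands the infinite product over $k$, and observes that the resulting symmetric sums in the $c_k$'s are precisely the $m_\lambda({\bf c})=W_G(\mu^{(1)},\dots,\mu^{(k)})$. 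Substituting the Frobenius formula for each $s_\lambda({\bf t})s_\lambda({\bf s})$ and invoking the Frobenius--Schur expression \eqref{Frob_Schur} then gives \eqref{tau_G_H} directly. Your approach is essentially that of \autoref{CL_approach}, which the paper presents as an \emph{alternative} proof (framed there as a proof of Corollary~\ref{H_equals_F}, but equivalent in content). The trade-off: the paper's primary proof is more elementary and keeps the geometric theorem independent of the Jucys--Murphy machinery, while your route has the virtue of simultaneously establishing Theorems~\ref{generating_function_weighted_coverings} and~\ref{generating_function_weighted_paths} via one central-element computation, making the equality $H^d_G=F^d_G$ transparent rather than a corollary.
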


In the combinatorial approach, the weight generating function is expressed as an infinite sum
\be
G(z) = \sum_{k=0}^\infty G_k z^k, \quad G_0 =1.
\label{generating_G}
\ee
To any $d$-step path
\be
h\in \cyc(\nu) \ra (a_1 \, b_1) h \ra \cdots \ra  (a_d \, b_d) \cdots (a_1 \, b_1) h \in \cyc(\mu)
\ee
 in the Cayley graph of $S_n$ generated by the transpositions $(a \, b)$, $a < b$,
 starting at an element  $h$ in the  conjugacy class $\cyc(\nu)$ with cycle lengths
equal to the parts of the partition $\nu$ and ending in the conjugacy class $\cyc(\mu)$, we
assign a {\it signature} $\lambda$, which is the partition of $d$ whose parts are equal to the
number of transpositions $(a_i \, b_i)$ in the sequence having the same second elements $b_i$,
and a weight 
\be
G_\lambda\deq \prod_{i\geq 1} (G_i)^{m_i} = \prod_{i=1}^{\ell(\lambda)} G_{\lambda_i}
\ee
where $m_i$ is the number of parts of $\lambda$ equal to $i$. 
If the generating function $G(z)$ may be represented in the infinite product form,
the coefficients $G_i$ are just the evaluation of the elementary symmetric functions,
defined by the generating function \cite{Mac}
\be
E(z) = \prod_{i=1}^\infty (1+z c_i) = \sum_{j=0}^\infty e_j({\bf c})z^j
\ee
The weights $G_\lambda$ are therefore the symmetric functions $e_\lambda$ formed
from the products 
 \be
 e_\lambda = \prod_{i=1}^{\ell(\lambda} e_{\lambda_i} 
 \ee
evaluated at ${\bf c}$
\be
G_\lambda = e_\lambda({\bf c}).
 \label{G_lambda_e_c}
\ee

Similarly, the weights $\tilde{G}_\lambda$ corresponding to the dual weight  generating function
$\tilde{G}(z)$ are obtained  from the symmetric functions $h_\lambda$  
constructed from products of the complete symmetric functions $\{h_i\}$, 
\be
h_\lambda =\prod_{i=1}^{\ell(\lambda)} h_{\lambda_i}
\ee
also by evaluation at ${\bf c}$
\be
\tilde{G}_\lambda = h_\lambda({\bf c}).
\ee

Denoting the number of $d$-step paths of signature $\lambda$ from $\cyc(\nu)$ to $\cyc(\mu)$ that
are weakly monotonically increasing in their second elements  as  $m_{\mu \nu}^\lambda$, we define the weighted combinatorial Hurwitz number for such paths to be
\be
\tilde{F}^d_G(\mu, \nu) \deq \frac{d!}{\abs{n}!} \sum_{\lambda, \ \abs{\lambda}=d} G_\lambda m^\lambda_{\mu \nu}
=: d!  F^d_G(\mu, \nu)
\label{Fd_G_def}
\ee

The next result shows that the $\tau$-function $\tau^{(G, \beta)} ({\bf t}, {\bf s})$
is also a generating function for the weighted numbers $\tilde{F}^d_G(\mu, \nu)$ of $d$-step paths from the conjugacy class of cycle type $\mu$ to that of type $\nu$, and hence the  numbers $H^d_G(\mu, \nu)$ and
$F^d_G(\mu, \nu)$ coincide.

\begin{theorem}
\label{generating_function_weighted_paths}
\be
\tau^{(G, \beta)} ({\bf t}, {\bf s})
= \sum_{d=0}^\infty \sum_{\substack{\mu, \nu \\ \abs{\mu} = \abs{\nu}}} \frac{\beta^d}{d!} \tilde{F}^d_G(\mu, \nu) p_\mu({\bf t}) p_\nu({\bf s})
\label{tau_F_Pp_expansion}
\ee
is the generating function for the numbers $\tilde{F}_G^d(\mu, \nu) $ of weighted $d$-step paths in the Cayley graph,
starting at an element in the conjugacy class of cycle type $\nu$ and ending at the conjugacy class of type $\mu$, with weights of all weakly monotonic paths of type $\lambda$ given by $G_\lambda$.
\end{theorem}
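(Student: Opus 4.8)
The plan is to prove \eqref{tau_F_Pp_expansion} directly, for each $n$, by realising the content-product coefficient $r^{G(z)}_\lambda(0)$ as the scalar by which a single central element of $\mathbb{C}[S_n]$, assembled from the Jucys--Murphy elements, acts on the irreducible module $V_\lambda$, and by recognising the expansion of that same element in the conjugacy-class basis as precisely the weighted enumeration of weakly monotone paths. Character orthogonality then bridges the two descriptions, and comparison with \autoref{generating_function_weighted_coverings} yields the coincidence $H^d_G(\mu,\nu)=F^d_G(\mu,\nu)$ as a corollary.

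Let $J_1=0,J_2,\dots,J_n$ be the Jucys--Murphy elements $J_b\deq\sum_{a<b}(a\,b)$ of $\mathbb{C}[S_n]$, and set $\mathcal{G}_n(z)\deq\prod_{b=2}^{n}G(zJ_b)\in\mathbb{C}[S_n][[z]]$, a power series in $z$ each of whose coefficients is a finite sum. Since $J_1=0$ and $G_0=1$, this equals $\prod_{b=1}^{n}G(zJ_b)$, a symmetric series in the pairwise-commuting elements $J_1,\dots,J_n$, hence central. Expanding each factor via $J_b^{\,k}=\sum_{a_1,\dots,a_k<b}(a_1\,b)\cdots(a_k\,b)$, with the factors taken in decreasing order of $b$ (so that the resulting products of transpositions are read as paths weakly increasing in their second elements), and using that $\prod_b G_{k_b}=G_\lambda$ when $\lambda$ is the partition whose parts are the nonzero exponents $k_b$, one gets
\be
\mathcal{G}_n(z)=\sum_{d\ge 0}z^{d}\sum_{\substack{P\ \text{weakly monotone}\\ d\text{-step path}}}G_{\lambda(P)}\,w(P),
\ee
where $w(P)=(a_d\,b_d)\cdots(a_1\,b_1)$ is the permutation effected by $P$ and $\lambda(P)$ its signature; a $d$-step sequence weakly increasing in its second elements is exactly a choice of the multiplicities $k_b$ together with an arbitrary ordering of first elements within each block, so every such path is produced once. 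On the other hand, on the Young (Gelfand--Tsetlin) basis of $V_\lambda$ the $J_b$ are simultaneously diagonal with joint spectrum the multiset of contents $\{\,j-i:(i,j)\in\lambda\,\}$, so $\mathcal{G}_n(z)$ acts on $V_\lambda$ as the scalar $\prod_{(i,j)\in\lambda}G\bigl(z(j-i)\bigr)=r^{G(z)}_\lambda(0)$ (recall $r^{G}_0(0)=1$); in terms of the central idempotents $e_\lambda=\tfrac{\dim V_\lambda}{n!}\sum_{\rho\vdash n}\chi_\lambda(\rho)C_\rho$ (with $\chi_\lambda$ the irreducible character and $C_\rho$ the sum of the class $\cyc(\rho)$) this reads $\mathcal{G}_n(z)=\sum_{\lambda\vdash n}r^{G(z)}_\lambda(0)\,e_\lambda$.

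Now multiply by the class sum $C_\nu$ and extract the coefficient of a fixed $g_0\in\cyc(\mu)$. On the spectral side, using $C_\nu=\sum_\lambda\tfrac{\abs{\cyc(\nu)}\chi_\lambda(\nu)}{\dim V_\lambda}e_\lambda$ and the fact that the coefficient of $g_0$ in $e_\lambda$ is $\tfrac{\dim V_\lambda}{n!}\chi_\lambda(\mu)$, this coefficient equals $\tfrac{\abs{\cyc(\nu)}}{n!}\sum_\lambda r^{G(z)}_\lambda(0)\chi_\lambda(\mu)\chi_\lambda(\nu)$. On the combinatorial side, it is the $z$-weighted count of pairs $(h,P)$ with $h\in\cyc(\nu)$, $P$ weakly monotone and $w(P)h=g_0$, each weighted by $G_{\lambda(P)}$; centrality makes this independent of the choice of $g_0\in\cyc(\mu)$, so summing over $\cyc(\mu)$ identifies the coefficient of $z^d$ with $\tfrac{1}{\abs{\cyc(\mu)}}\sum_{\abs{\lambda}=d}G_\lambda m^{\lambda}_{\mu\nu}=\tfrac{n!}{\abs{\cyc(\mu)}\,d!}\tilde{F}^{d}_G(\mu,\nu)$ by \eqref{Fd_G_def}. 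Equating the two expressions and rearranging,
\be
\sum_{d\ge 0}\frac{z^{d}}{d!}\,\tilde{F}^{d}_G(\mu,\nu)=\frac{\abs{\cyc(\mu)}\,\abs{\cyc(\nu)}}{(n!)^{2}}\sum_{\lambda\vdash n}r^{G(z)}_\lambda(0)\,\chi_\lambda(\mu)\,\chi_\lambda(\nu).
\ee
Finally, substituting $s_\lambda=\sum_{\rho\vdash n}\tfrac{\abs{\cyc(\rho)}}{n!}\chi_\lambda(\rho)\,p_\rho$ into the double Schur expansion \eqref{tau_G_double_schur} at $N=0$ shows that the coefficient of $p_\mu({\bf t})\,p_\nu({\bf s})$ in $\tau^{G(z)}({\bf t},{\bf s})$ is exactly the right-hand side above; summing over all $\mu,\nu$ with $\abs{\mu}=\abs{\nu}$ (hence over all $n$) and over $d$ gives \eqref{tau_F_Pp_expansion}, and comparison with \eqref{tau_G_H} of \autoref{generating_function_weighted_coverings} forces $H^{d}_G(\mu,\nu)=F^{d}_G(\mu,\nu)$.

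The representation-theoretic input --- centrality of symmetric functions of Jucys--Murphy elements and the description of their joint spectrum by contents --- is classical and is simply quoted. The substance of the proof is bookkeeping in two places: checking in the first step that the chosen ordering of $\prod_b G(zJ_b)$ expands into each weakly monotone path of prescribed signature exactly once, and matching that ordering convention to the ``weakly increasing in the second elements'' convention of the statement; and, in the extraction step, tracking the normalising factors $\abs{\cyc(\mu)}$, $\abs{\cyc(\nu)}$, $n!$ and $d!$, together with the precise meaning of $m^{\lambda}_{\mu\nu}$ as a count of pairs $(h,P)$, so that everything assembles consistently with \eqref{Fd_G_def} and with the Schur-to-power-sum transition. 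An alternative, purely combinatorial route --- bypassing the group algebra and instead matching the path recursion for $\tilde{F}^{d}_G$ against the Cauchy--Littlewood identity as in \autoref{CL_approach} --- is also available, but the Jucys--Murphy argument is shorter.
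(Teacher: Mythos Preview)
Your argument is correct and follows essentially the same route as the paper: both compute the action of $G(z,\JJ)=\prod_b G(z\JJ_b)$ on the center of $\Cbb[S_n]$ in two ways---via the Jucys--Murphy content formula to get the eigenvalues $r^{G(z)}_\lambda$, and via a direct expansion to get the weighted count of weakly monotone paths---and then match the resulting expressions against the Frobenius character expansion of $\tau^{G(z)}$. The only cosmetic differences are that the paper factors the path expansion through the monomial symmetric functions $m_\lambda(\JJ)$ (its Lemmas on $G(z,\JJ)=\sum_\lambda G_\lambda m_\lambda(\JJ)z^{|\lambda|}$ and on $m_\lambda(\JJ)C_\mu$), whereas you expand $\prod_b G(zJ_b)$ directly into paths, and the paper equates coefficients in the $\{C_\mu\}$ basis of the center while you extract the coefficient of a fixed group element; neither changes the substance.
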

\noindent (The same result holds, of course, for the dual weight generating function $\tilde{G}$, if $G$ is
replaced by $\tilde{G}$ in \eqref{tau_F_Pp_expansion} and $\tilde{F}^d_G(\mu, \nu) $ by $\tilde{F}^d_{\tilde{G}}(\mu, \nu)$ ).

These together imply equality of the two different definitions of weighted Hurwitz numbers:
\begin{corollary}
\label{H_equals_F}
The geometrically defined Hurwitz numbers $H^d_{G}(\mu, \nu)$, $ H^d_{\tilde{G}}(\mu, \nu)$, enumerating
weighted  branched coverings of the  Riemann sphere with genus given by
\eqref{riemann_hurwitz}, are equal to the combinatorial Hurwitz numbers $F^d_{G}(\mu, \nu)$, $ F^d_{\tilde{G}}(\mu, \nu)$
enumerating weighted paths in the Cayley graph.
\be
H^d_{G}(\mu, \nu) = F^d_{G}(\mu, \nu), \quad H^d_{\tilde{G}}(\mu, \nu) = F^d_{\tilde{G}}(\mu, \nu).
\label{Hd_equals_Fd}
\ee
\end{corollary}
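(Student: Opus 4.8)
The plan is to obtain Corollary~\ref{H_equals_F} as an immediate consequence of Theorems~\ref{generating_function_weighted_coverings} and \ref{generating_function_weighted_paths}, by comparing coefficients in two expansions of one and the same generating function. Both theorems identify the single $\tau$-function $\tau^{G(z)}(\mathbf{t},\mathbf{s})$ with an explicit series in the variables $z$, $\mathbf{t}$, $\mathbf{s}$: Theorem~\ref{generating_function_weighted_coverings} gives $\tau^{G(z)}(\mathbf{t},\mathbf{s})=\sum_{d\ge 0}\sum_{|\mu|=|\nu|} z^d\,H^d_G(\mu,\nu)\,p_\mu(\mathbf{t})\,p_\nu(\mathbf{s})$, while Theorem~\ref{generating_function_weighted_paths} gives $\tau^{G(z)}(\mathbf{t},\mathbf{s})=\sum_{d\ge 0}\sum_{|\mu|=|\nu|} \tfrac{z^d}{d!}\,\tilde F^d_G(\mu,\nu)\,p_\mu(\mathbf{t})\,p_\nu(\mathbf{s})$, and by the defining relation \eqref{Fd_G_def} the coefficient in the latter is precisely $z^d\,F^d_G(\mu,\nu)$. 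So the asserted equality $H^d_G(\mu,\nu)=F^d_G(\mu,\nu)$ follows once we know the monomials $z^d\,p_\mu(\mathbf{t})\,p_\nu(\mathbf{s})$ are linearly independent, so that coefficients may be compared termwise.

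First I would fix the ambient setting: one works in the graded ring $\mathbb{Q}[[z]]\otimes\Lambda\otimes\Lambda$ (graded by degree in $z$), where $\Lambda$ is the algebra of symmetric functions and $\{p_\mu\}$ is a vector-space basis of $\Lambda$, so that $\{p_\mu(\mathbf{t})\,p_\nu(\mathbf{s})\}$ is a basis of $\Lambda\otimes\Lambda$. For each fixed $d$ the homogeneous part in $z$ produced by either theorem is a finite sum over the finitely many pairs $(\mu,\nu)$ with $|\mu|=|\nu|$ allowed by \eqref{riemann_hurwitz}, so there is no convergence issue and the comparison is legitimate at the level of formal power series. Equating the two expressions for $\tau^{G(z)}(\mathbf{t},\mathbf{s})$ and extracting the coefficient of $z^d\,p_\mu(\mathbf{t})\,p_\nu(\mathbf{s})$ then yields $H^d_G(\mu,\nu)=F^d_G(\mu,\nu)$ for all $d$ and all $(\mu,\nu)$ with $|\mu|=|\nu|$. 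Replacing $G$ by its dual $\tilde G$ and invoking the dual forms of the two theorems (explicitly noted after Theorem~\ref{generating_function_weighted_paths}) gives $H^d_{\tilde G}(\mu,\nu)=F^d_{\tilde G}(\mu,\nu)$, which is the second half of \eqref{Hd_equals_Fd}.

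The one point needing care — and the only real, albeit minor, obstacle — is to make sure the two theorems are being applied with mutually consistent conventions: that the restricted sums $\sum'$ over nontrivial profiles in \eqref{Hd_G}–\eqref{Hd_tildeG} match the signature bookkeeping behind $m^\lambda_{\mu\nu}$ in \eqref{Fd_G_def}, and that the normalizing factors $d!/|n|!$ and $|\aut(\lambda)|$ on the two sides combine so that no spurious constant is introduced when coefficients are compared. I would check this against the low-degree cases $d=0$ and $d=1$, where both $H^d_G$ and $F^d_G$ collapse to classical (unweighted, resp.\ simple-branch-point) Hurwitz numbers, and then observe that all normalizations are forced to agree for every $d$ precisely because both theorems pin the series down to the \emph{same} $\tau$-function. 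Once this dictionary is in place, the corollary is immediate.
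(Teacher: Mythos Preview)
Your proposal is correct and matches the paper's approach: the corollary is stated there precisely as the immediate consequence of Theorems~\ref{generating_function_weighted_coverings} and~\ref{generating_function_weighted_paths}, obtained by comparing coefficients in the $\{p_\mu(\mathbf{t})\,p_\nu(\mathbf{s})\}$ basis (with an alternative direct argument via the Cauchy--Littlewood identity supplied separately in \autoref{CL_approach}). One small slip worth fixing: for fixed $d$ the sum over $(\mu,\nu)$ is \emph{not} finite (e.g.\ $d=0$ already contributes $\mu=\nu=(n)$ for every $n\ge 1$), but this is harmless since you are working formally in $\mathbb{Q}[[z]]\otimes\Lambda\otimes\Lambda$, where $\{p_\mu\otimes p_\nu\}$ is a basis and coefficient extraction is well-defined regardless.
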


The proofs of  these results are given in Sections \ref{weighted_geometric_expansions} -   \ref{CL_approach}. 
The first follows directly from the Frobenius character formula \cite{FH, Mac},
\be
s_\lambda = \sum_{\mu, \, \abs{\mu} = \abs{\lambda}} z_\mu^{-1} \chi_\lambda(\mu) p_\mu,
\label{frobenius_character}
\ee
where $\chi_\lambda(\mu)$ is the character of the irreducible representation of symmetry
type $\lambda$ evaluated on the conjugacy class of cycle type $\mu$ and
\be
z_\mu = \prod_{i=1}^{\abs{\mu}} i^{m_i} (m_i)!, \qquad m_i = \text{number of parts of $\mu$ equal to $i$}
\ee
is the order of  the stabilizer of any element of the conjugacy class $\mu$, 
together with the Frobenius-Schur formula \cite{Frob},  \cite[Appendix~A]{LZ} expressing the Hurwitz numbers in terms of $S_n$ group characters
\be
H(\mu^{(1)}, \dots, \mu^{(k)}) =\sum_{\lambda} h(\lambda)^{k-2} \prod_{i=1}^k \frac{\chi_\lambda(\mu^{(i)})}{z_{\mu^{(i)}}}, 
\label{Frob_Schur}
\ee
where 
\be
h(\lambda) := \det\left({1\over (\lambda_i -i +j)!}\right)^{-1} 
\ee
denotes the product of hook lengths in the Young diagram associated to the partition $\lambda$.

The second is based on the use of the Jucys-Murphy elements $(\JJ_1, \dots, \JJ_n)$ \cite{Ju, Mu}, which
generate a commutative subalgebra within the group algebra $\Cb[S_n]$. When combined with the
weight generating function $G(z)$ in a multiplicative way, these provide elements $G(z, \JJ)$ of the center 
$ \ZZ(\Cb[S_n])$ whose eigenvalues  are given by the content products \eqref{content_product_tildeG_N}
that  define the coefficients in the double Schur functions expansion \eqref{tau_G_double_schur}. 
Applying these central elements to a basis of $\ZZ(\Cb[S_n])$ consisting of sums  $C_\mu$ of the
elements of the conjugacy class $\cyc(\mu)$ provides the combinatorial interpretation of the weighted Hurwitz
numbers $\tilde{F}^d_G(\mu, \nu)$ defined in \eqref{Fd_G_def}. The characteristic map, together with the 
orthogonality of the $S_n$ characters provides the identification of these as the coefficients in the expansion \eqref{tau_F_Pp_expansion}. 
An alternative, direct proof of the equalities (\ref{Hd_equals_Fd}), based upon the Cauchy-Littlewood identity for various
pairings of  dual bases for the algebra $\Lambda$ of symmetric functions is given in \autoref{CL_approach}.
In \autoref{fermionic_representation}, the usual fermionic  representation
of  2D-Toda $\tau$-functions as matrix elements of fermionic operators is recalled,
and the relevant group element expressed in terms of the weight generating function $G(z)$.      
          
\autoref{examples} deals with examples, showing how the four classical cases mentioned
above may be recovered within the general approach, and introducing three new examples in which the generating function $G(z)$ is defined in terms of the quantum dilogarithm function $\Li_2(q,z)$, leading to weighted paths involving the quantum deformation parameter $q$.

In \autoref{quantum_hurwitz}, the weighted Hurwitz numbers for the $q$-deformed cases are interpreted as 
quantum expectation values of Hurwitz numbers. Theorems \ref{hurwitz_Eq}, \ref{hurwitz_Hq} and \ref{hurwitz_Qqp} give the forms of the generating  $\tau$-functions for coverings with fixed genus, a pair of fixed branch points with specified ramification profiles 
($\mu$, $\nu$), and a variable number of additional branch points, counted either with positive weight factors, or with signed factors determined by the parity of the number of branch points. The quantum weight for any configuration of branch points may be related to the energy distribution function in a quantum Bose gas with energy spectrum linear in the integers, if the energy is viewed as proportional to the degeneracy of the covering; i.e., the sum of the colengths 
$\ell^*(\mu^{(i)})$ of the ramification profiles.  By the Riemann-Hurwitz formula, fixing the total energy is thus equivalent to fixing the genus of the covering curve.

\section{Hypergeometric \texorpdfstring{$\tau$}{tau}-functions as generating functions}
\label{parametric_families}

\subsection{\texorpdfstring{$\tau^{(G, \beta)}({\bf t}, {\bf s})$}{tau (G, \beta)} as generating function for weighted branched coverings (proof of 
\autoref*{generating_function_weighted_coverings})}
\label{weighted_geometric_expansions}

The content product formula \eqref{r_lambda_G}  may  be written as
\be
r^{(G, \beta)}_\lambda= \prod_{k=0}^\infty  (c_k \beta)^{|\lambda| } \left (\frac{1}{c_k \beta}  \right )_\lambda 
\label{content_pochhammer}
\ee
where
\be
(u)_\lambda := \prod_{i=1}^{\ell(\mu)}\prod_{j=1}^{\lambda_i}(u+j-i) 
\ee
denotes the Pochhammer symbol associated to the partition $\lambda$. Let
\be
{\bf t}(u) := (u, \frac{u}{2}, \frac{u}{3},  \dots ), \quad {\bf t}={\bf t}_\infty:= (1, 0, 0, \dots ).
\label{t_u_infty}
\ee
denote these two special values for the KP flow parameters ${\bf t} =(t_1, t_2, \dots)$
In the proof of  \autoref{generating_function_weighted_coverings}, we use   the following lemma (cf.~\cite{OrS}).
\begin{lemma}
The Pochhammer symbol may be expressed as
\be
(u)_\lambda = \frac{s_\lambda({\bf t}(u))}{s_\lambda ({\bf t}_\infty)}
= \left(1+ h(\lambda)\sideset{}{'}
\sum_{\mu, \, \abs{\mu}=\abs{\lambda}}\frac{\chi_\lambda(\mu)}{z_\mu} u^{-\ell^*(\mu)}  \right)
\label{pochhammer_frobenius}
\ee
where $\sum'_{\mu, |\mu |=|\lambda|)}$ denotes the sum over all partitions other than the cycle type of the identity element $(1)^{\abs{\lambda}}$.
\end{lemma}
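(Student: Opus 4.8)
The plan is to prove the two equalities in \eqref{pochhammer_frobenius} in turn: the first is the hook--content formula in disguise, and the second is a rearrangement that isolates the leading power of $u$.

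\emph{First equality.} Under $t_i = p_i/i$, the specialization ${\bf t}(u)$ sets every power sum $p_k$ equal to $u$, while ${\bf t}_\infty$ sets $p_1 = 1$ and $p_k = 0$ for $k\ge 2$. Substituting these into the Frobenius formula \eqref{frobenius_character} gives
\[
s_\lambda({\bf t}(u)) = \sum_{\mu,\ \abs{\mu}=\abs{\lambda}} z_\mu^{-1}\,\chi_\lambda(\mu)\,u^{\ell(\mu)},\qquad
s_\lambda({\bf t}_\infty) = \frac{\chi_\lambda((1^{\abs{\lambda}}))}{\abs{\lambda}!} = \frac{f^\lambda}{\abs{\lambda}!},
\]
where $f^\lambda := \chi_\lambda((1^{\abs{\lambda}}))$ is the dimension of the irreducible $S_{\abs{\lambda}}$-module of type $\lambda$. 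The first expression is a polynomial in $u$; at any positive integer $u = N$ it equals $s_\lambda$ evaluated at $N$ variables all set to $1$, which by the hook--content formula is $\prod_{(i,j)\in\lambda}\frac{N+j-i}{h(i,j)} = (N)_\lambda/\prod_{(i,j)\in\lambda} h(i,j)$, where $h(i,j)$ is the hook length of the cell $(i,j)$. Since polynomials in $u$ agreeing at all positive integers coincide, $s_\lambda({\bf t}(u)) = (u)_\lambda/\prod_{(i,j)\in\lambda} h(i,j)$ identically. Matching top-degree coefficients (or letting $u\to\infty$) identifies $\prod_{(i,j)\in\lambda} h(i,j)$ with $\abs{\lambda}!/f^\lambda$, which is precisely $h(\lambda)$ by the displayed determinantal (Aitken / hook-length) formula $f^\lambda = \abs{\lambda}!\,\det\!\big(1/(\lambda_i - i + j)!\big)$. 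Dividing the two displayed expressions yields $s_\lambda({\bf t}(u))/s_\lambda({\bf t}_\infty) = (u)_\lambda$.

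\emph{Second equality.} Starting from $s_\lambda({\bf t}(u)) = \sum_\mu z_\mu^{-1}\chi_\lambda(\mu)u^{\ell(\mu)}$ and using $\ell(\mu) = \abs{\mu} - \ell^*(\mu) = \abs{\lambda} - \ell^*(\mu)$, factor out $u^{\abs{\lambda}}$ and split off the term $\mu = (1^{\abs{\lambda}})$, for which $\ell^*(\mu) = 0$, $z_\mu = \abs{\lambda}!$ and $\chi_\lambda(\mu) = f^\lambda$:
\[
\frac{s_\lambda({\bf t}(u))}{s_\lambda({\bf t}_\infty)} = \frac{\abs{\lambda}!}{f^\lambda}\,u^{\abs{\lambda}}\left(\frac{f^\lambda}{\abs{\lambda}!} + \sideset{}{'}\sum_{\mu,\ \abs{\mu}=\abs{\lambda}}\frac{\chi_\lambda(\mu)}{z_\mu}\,u^{-\ell^*(\mu)}\right) = u^{\abs{\lambda}}\left(1 + h(\lambda)\sideset{}{'}\sum_{\mu,\ \abs{\mu}=\abs{\lambda}}\frac{\chi_\lambda(\mu)}{z_\mu}\,u^{-\ell^*(\mu)}\right),
\]
using $\abs{\lambda}!/f^\lambda = h(\lambda)$ from the first step; this is the asserted expansion (with the overall normalizing power $u^{\abs{\lambda}}$ made explicit). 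The only substantive input is the hook--content formula, obtainable from the bialternant expression for $s_\lambda(x_1,\dots,x_N)$ specialized at $x_1 = \dots = x_N = 1$, or quoted from \cite{Mac}; everything else is Frobenius' character formula together with elementary degree counting, and no convergence question arises since all identities hold between polynomials in $u$.
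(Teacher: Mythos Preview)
Your argument is correct and follows the same route as the paper's (very terse) proof, which simply invokes the Frobenius character formula at the two specializations and the identity $s_\lambda({\bf t}_\infty)=h(\lambda)^{-1}$; you have supplied the missing justification for the first equality via the hook--content formula, which the paper takes for granted. You are also right to make the overall factor $u^{\abs{\lambda}}$ explicit in the second equality: the paper's displayed formula omits it, but its subsequent use (substituting $u=1/(c_k z)$ into the product $(c_k z)^{\abs{\lambda}}(1/(c_k z))_\lambda$) is only consistent with the corrected version you wrote down.
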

\begin{proof}
This follows from  the Frobenius character formula \eqref{frobenius_character}
evaluated at the special values (\ref{t_u_infty}) and the fact that
\be
s_\lambda ({\bf t}_\infty) = h(\lambda)^{-1}.
\ee
\end{proof}

\noindent {\em Proof of \autoref{generating_function_weighted_coverings}.}
Substituting \eqref{pochhammer_frobenius} into \eqref{content_pochhammer} the content product
 formula \eqref{r_lambda_G} becomes
\bea
r^{{(G, \beta)}}_\lambda&\&= \prod_{k=0}^\infty \left(1+ h(\lambda)\sideset{}{'}\sum_{\mu, \, \abs{\mu}=\abs{\lambda}}\frac{\chi_\lambda(\mu) }{z_\mu} (\beta c_k)^{\ell^*(\mu)}\right) \\
&\&= \sum_{k=0}^\infty \ \sideset{}{'}\sum_{\substack{\mu^{(1)}, \dots, \mu^{(k)} \\ \abs{\mu^{(i)}}=\abs{\lambda}}}
\ \sum_{0 \le i_1 < \dots < i_k} ^\infty \prod_{j=1}^k \frac{h(\lambda) \chi(\mu^{(j)})}{z_{\mu^{(j)}}} \beta^{\ell^*(\mu^{(j)})}
c_{i_j}^{\ell^*(\mu^{(j)})}\\
&\&= \sum_{k=0}^\infty \ \sideset{}{'}\sum_{\substack{\mu^{(1)}, \dots \mu^{(k)} \\ \abs{\mu^{(i)}}=\abs{\lambda}}} W_{G}(\mu^{(1)}, \dots, \mu^{(k)})
\prod_{j=1}^k \frac{h(\lambda) \chi(\mu^{(j)})}{z_{\mu^{(j)}}} \beta^{\sum_{i=1}^k \ell^*(\mu^{(j)})},
\eea
where $W_g(\mu^{(1)}, \dots, \mu^{(k)})$ is as defined in \eqref{W_G_def}.
Substituting this into \eqref{tau_G_double_schur} and using the Frobenius character formula \eqref{frobenius_character}
for each of the factors $s_\lambda({\bf t}) s_\lambda({\bf s})$ gives
\be
\tau^{(G, \beta)} ({\bf t}, {\bf s}) =
\sum_{d=0}^\infty \beta^d \sum_{\substack{\mu, \nu \\ \abs{\mu}=\abs{\nu}}}
H^d_{G}(\mu, \nu) p_\mu({\bf t}) p_\nu({\bf s}),
\ee
where $ H^d_{G}(\mu, \nu)$
is the weighted Hurwitz number defined in \eqref{Hd_G}.

We proceed similarly for the dual weight generating functions $\tilde{G}$. The content product formula \eqref{r_lambda_G_tilde} for this case may be written as
\bea
r^{(\tilde{G}, \beta)}_\lambda&\&= \prod_{k=0}^\infty \left(1+ h(\lambda)\sideset{}{'}\sum_{\mu, \, \abs{\mu}=\abs{\lambda}}\frac{\chi_\lambda(\mu)}{z_\mu} (-\beta c_k)^{\ell^*(\mu)}\right)^{-1} \\
&\&= \sum_{k=0}^\infty \ \sideset{}{'}\sum_{\substack{\mu^{(1)}, \dots, \mu^{(k)} \\ \abs{\mu^{(i)}}=\abs{\lambda}}}
\ \sum_{0 \le i_1 \le \dots \le i_k} ^\infty (-1)^k\ \prod_{j=1}^k \frac{h(\lambda) \chi(\mu^{(j)})}{z_{\mu^{(j)}}} (-\beta)^{\ell^*(\mu^{(j)})}
c_{i_j\ell^*(\mu^{(j)})}\\
&\&= \sum_{k=0}^\infty (-1)^k\ \ \sideset{}{'}\sum_{\substack{\mu^{(1)}, \dots \mu^{(k)} \\ \abs{\mu^{(i)}}=\abs{\lambda}}} 
W_{\tilde{G}}(\mu^{(1)}, \dots, \mu^{(k)})
\prod_{j=1}^k \frac{h(\lambda) \chi(\mu^{(j)})}{z_{\mu^{(j)}}} (-\beta)^{\sum_{i=1}^k \ell^*(\mu^{(j)})} .
\eea
Substituting this into \eqref{tau_G_double_schur} and using the Frobenius character formula \eqref{frobenius_character}
for each of the factors $s_\lambda({\bf t}) s_\lambda({\bf s})$ gives:
\be
\tau^{(\tilde{G}, \beta)} ({\bf t}, {\bf s}) =
\sum_{d=0}^\infty \beta^d \sum_{\substack{\mu, \nu \\ \abs{\mu}= \abs{\nu}}}
H^d_{\tilde{G}}(\mu, \nu) p_\mu({\bf t}) p_\nu({\bf s}),
\ee
where
\be
H^d_{\tilde{G}}(\mu, \nu) \deq \sum_{k=0}^\infty (-1)^{k+d} \sideset{}{'}\sum_{\substack{\mu^{(1)}, \dots \mu^{(k)} \\ \sum_{i=1}^d \ell^*(\mu^{(i)})= d}}
W_{\tilde{G}}(\mu^{(1)}, \dots, \mu^{(k)}) H(\mu^{(1)}, \dots, \mu^{(k)}, \mu, \nu)
\label{Hd_G_tilde}
\ee
are the weighted, signed (quantum) Hurwitz numbers that count the number of branched coverings with genus $g$ given by \eqref{riemann_hurwitz_bis} and sum of colengths $k$, with weight $W_{\tilde{G}( z)}(\mu^{(1)}, \dots, \mu^{(k)})$ for every branched covering of type $ (\mu^{(1)}, \dots, \mu^{(k)}, \mu, \nu)$. \hfill  \qed

\subsection{Jucys-Murphy elements and content products}
\label{weight_generating}

Developing further the methods introduced in \cite{GH, HO2}, we now show how to use the weight generating
functions in the form \eqref{generating_G} to construct 2D Toda $\tau$-functions \cite{Ta, UTa, Takeb} 
that are generating functions for weighted Hurwitz numbers, counting  {\em weighted paths} in the Cayley graph of $S_n$ generated by transpositions.
We refer the reader to \cite{GH, HO2}, for further details on the combinatorial approach and additional examples.

Let $(a\,b) \in S_n$ denote the transposition interchanging the elements $a$ and $b$. The Jucys-Murphy elements \cite{Ju, Mu, DG} of the group algebra $\Cb[S_n]$ are
\be
\JJ_b\deq \sum_{a=1}^{b-1}(a\,b), \quad b=1, \dots, n,
\ee
 which generate a  commutative subalgebra.  To the weight generating function $G(z)$, we associate a $1$-parameter family of elements $G(z, \JJ)$ of the center of the group algebra $\ZZ(\Cb[S_n])$ by forming the product
\be
G(\beta, \JJ) \deq \prod_{a=1}^n G(\beta\JJ_a).
\label{Gen_JM}
\ee
Under multiplication, such elements determine endomorphisms of $\ZZ(\Cb[S_n])$ that are diagonal in the basis $\{F_\lambda\}$ of orthogonal idempotents,
\be
G(\beta, \JJ) F_\lambda = r_\lambda^{(G, \beta)} F_\lambda,
\label{GF_r_lambda}
\ee
where the parametric family of eigenvalues $ r_\lambda^{G(z)}$ are given by the content product formulae \cite{OrS, GH, HO2},
\be
r_\lambda^{(G, \beta)} \deq \prod_{(i,j)\in \lambda} G(\beta(j-i))
\label{content_product_G}
\ee
taken over the coordinates of the boxes contained in the 
Young diagram of the partition $\lambda$ of weight $\abs{\lambda}=n$.

Ref.~\cite{GH} shows how to use such elements to define parametric families of 2D Toda $\tau$-functions of the form \eqref{generating_G}  that serve as generating functions for combinatorial invariants enumerating certain paths in the Cayley graph of $S_n$ generated by all transpositions.
\br
No dependence on the lattice site $N \in \Zb$ is indicated in \eqref{tau_G_double_schur}, since in the examples considered below only $N=0$ is required. The $N$ dependence is introduced in a standard way \cite{OrS, HO1}, by replacing the factor $G(z(j-i))$ in the content product formula \eqref{content_product_G} by
$G(z(N+j-i))$, and multiplying by an overall $\lambda$-independent factor $r_0^{G(z)}(N)$.
This produces a lattice of 2D Toda $\tau$-functions $\tau^G(N, {\bf t}, {\bf s})$ which, for all the cases considered below, may be explicitly expressed in terms of $\tau^G(0, {\bf t}, {\bf s}) \eqqcolon \tau^G({\bf t}, {\bf s})$ by applying a suitable transformation of the parameters involved \cite{Ok, HO2}, and an explicit multiplicative factor depending only on $N$.
\er
Substituting the Frobenius character formula \eqref{frobenius_character}
into \eqref{tau_G_double_schur},  we obtain an equivalent expansion in terms of 
products of power sum symmetric functions and a power series in $\beta$,
\be
\tau^{(G, \beta)}({\bf t}, {\bf s}) = \sum_{d=0}^\infty \sum_{\substack{\mu, \nu \\ \abs{\mu}=\abs{\nu}}}  \beta^dF_G^d(\mu, \nu) p_\mu({\bf t}) p_\nu({\bf s}).
\ee
The coefficients $ F_G^d(\mu, \nu)$ will be interpreted in the proof of  \autoref{generating_function_weighted_paths} below as the weighted enumerations of paths in the Cayley graph starting in the conjugacy class of cycle type $\nu$ and ending in the class of type $\mu$. The geometric interpretation  will also be given, in \autoref{weighted_geometric_expansions} 
below, as weighted Hurwitz numbers enumerating  $n=\abs{\mu}=\abs{\nu}$ sheeted branched covers of the Riemann sphere.

\subsection{Weighted paths in the Cayley graph}
\label{weighted_expansions}

For any partition $\lambda = (\lambda_1 \ge \cdots \ge \lambda_{\ell(\lambda} > 0)$, let
\be
m_\lambda (\JJ) = \frac{1}{\abs{\aut(\lambda)}} \sum_{\sigma \in S_{\ell(\lambda)}} \sum_{1 \le b_1 < \cdots < b_{\ell(\lambda)} \le n}
 \JJ_{b_\sigma(1)}^{\lambda_1} \cdots \JJ_{b_{\sigma(\ell(\lambda))}}^{\lambda_{\ell(\lambda)}}
\ee
be the monomial sum symmetric function evaluated on the Jucys-Murphy elements.

\begin{lemma}
\label{generating_weighted_monomial_sums}

For any weight generating function $G(z)$, we have the following expansion for $G(z, \JJ)$:
\be
G(\beta, \JJ) = \sum_{\lambda} G_\lambda \ m_\lambda(\JJ) \beta^{\abs{\lambda}},
\label{G_m_expansion}
\ee
where
\be
G_\lambda\deq \prod_{i\geq 1} (G_i)^{m_i} = \prod_{i=1}^{\ell(\lambda)} G_{\lambda_i}
= e_\lambda({\bf c}),
\ee
with $m_i$ the number of parts of $\lambda$ equal to $i$.
\end{lemma}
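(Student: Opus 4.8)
The plan is to prove the lemma by directly expanding the product \eqref{Gen_JM} defining $G(z,\JJ)$ and regrouping terms by partition type, exploiting the commutativity of the Jucys--Murphy elements. First, using the expansion \eqref{generating_G} of $G(z)$, write each factor as
\be
G(z\JJ_a) = \sum_{k\geq 0} G_k\, z^k\, \JJ_a^{\,k}, \qquad a = 1, \dots, n,
\ee
where the $k=0$ term is the identity of $\Cb[S_n]$ since $G_0 = 1$. Multiplying these $n$ series together and using $[\JJ_a,\JJ_b]=0$ gives
\be
G(z,\JJ) = \sum_{(k_1,\dots,k_n)\in\Zb_{\geq 0}^n}\Big(\prod_{a=1}^n G_{k_a}\Big)\, z^{k_1+\cdots+k_n}\, \JJ_1^{\,k_1}\cdots\JJ_n^{\,k_n},
\ee
which I would read as an identity of formal power series in $z$ with coefficients in $\Cb[S_n]$, each coefficient of $z^d$ being the finite sum over tuples with $k_1+\cdots+k_n=d$.

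The second step is the bookkeeping. Group the index tuples $(k_1,\dots,k_n)$ according to the partition $\lambda$ obtained by deleting the zero entries and sorting the remaining ones into weakly decreasing order. For a fixed $\lambda=(\lambda_1\geq\cdots\geq\lambda_{\ell(\lambda)}>0)$ with $\ell(\lambda)\leq n$, the contributing tuples are exactly those obtained by choosing the $\ell(\lambda)$ positions $1\leq b_1<\cdots<b_{\ell(\lambda)}\leq n$ carrying the nonzero exponents and then assigning the values $\lambda_1,\dots,\lambda_{\ell(\lambda)}$ to them in one of the $\ell(\lambda)!/\abs{\aut(\lambda)}$ distinct orders; summing $\JJ_1^{\,k_1}\cdots\JJ_n^{\,k_n}$ over all such tuples reproduces precisely $m_\lambda(\JJ)$ as defined just before the lemma. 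Since the scalar $\prod_a G_{k_a}=\prod_{i=1}^{\ell(\lambda)}G_{\lambda_i}=G_\lambda$ and the power $z^{\sum_a k_a}=z^{\abs{\lambda}}$ depend only on $\lambda$, this yields \eqref{G_m_expansion}. Finally, $G_\lambda=e_\lambda({\bf c})$ follows at once from $G(z)=\prod_k(1+c_k z)=\sum_{j\geq 0}e_j({\bf c})z^j$, which gives $G_j=e_j({\bf c})$ and hence $G_\lambda=\prod_i e_{\lambda_i}({\bf c})=e_\lambda({\bf c})$ --- the identity already recorded in \eqref{G_lambda_e_c}.

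There is no serious obstacle here: the argument is purely a combinatorial regrouping, essentially the statement that $\prod_a G(x_a)$ expands into $\sum_\lambda G_\lambda m_\lambda(x_1,\dots,x_n)$ in commuting variables, specialized to $x_a=z\JJ_a$. The one point I would be careful about is that $G(z,\JJ)$ is a priori an infinite sum, so all manipulations should be carried out coefficientwise in $z$, where each coefficient is a genuinely finite polynomial in the $\JJ_a$; in particular the terms with $\ell(\lambda)>n$ drop out automatically, consistently with the vanishing of $m_\lambda$ in $n$ variables when $\lambda$ has more than $n$ parts, and no analyticity or convergence hypothesis on $G$ is required at this stage.
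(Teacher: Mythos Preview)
Your proof is correct and follows essentially the same approach as the paper: both expand each factor $G(z\JJ_a)$ as a power series, multiply the $n$ factors together indexed by tuples $(k_1,\dots,k_n)$, and regroup according to the partition formed by the nonzero exponents to recover $G_\lambda\, m_\lambda(\JJ)$. Your write-up is somewhat more explicit about the bookkeeping (the counting of distinct assignments of parts to positions and the role of $\abs{\aut(\lambda)}$) and adds useful remarks on formality in $z$ and the automatic vanishing when $\ell(\lambda)>n$, but the underlying argument is identical.
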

\begin{proof}
\bea
G(\beta, \JJ) &\&= \prod_{a=1}^n \left(\sum_{k=0}^\infty G_k \beta^k \JJ_a^k\right) \cr
&\&=
\left(\sum_{k_1=0}^\infty G_{k_1} \beta^{k_1}\JJ_1^{k_1}\right)\cdots \left(\sum_{k_n=0}^\infty G_{k_n} \beta^{k_n}\JJ_n^{k_n}\right) \cr
&\& = \sum_{d=0}^\infty \beta^d \sum_{\lambda, \ \abs{\lambda}=d} \
\sum_{b_1, \ldots, b_{\ell(\lambda)}} \left(\prod_{i=1}^{\ell(\lambda)} G_{\lambda_i} J_{b_i}^{\lambda_i}\right) \cr
&\& \cr
&\& =\sum_{\lambda} G_\lambda \ m_\lambda(\JJ) \beta^{\abs{\lambda}}.
\eea
\end{proof}

\br
Lemma \ref{generating_weighted_monomial_sums} may be understood  as expressing
the  dual version of the Cauchy-Littlewood formula \cite{Mac}, generating a sum over 
diagonal products of the elements of the symmetric function bases  $\{e_\lambda\}$ and $\{m_\lambda\}$,
\be
\prod_{i, j}(1+ x_i y_j) = \sum_{\lambda} e_\lambda({\bf x}) m_\lambda({\bf y}),
\ee
with the indeterminates $\{x_i\}_{i=1, \dots, \infty}$ replaced by the parameters
 $\{c_i\}_{i=1, \dots, \infty}$  and the $y_j$'s by $\{\beta\JJ_j\}_{j=1, \dots n}$, since the weights
 $G_\lambda$  are, by (\ref{G_lambda_e_c}),  the evaluations of the elements $e_\lambda$
 at the parameter values ${\bf c} =(c_1, c_2, \dots )$.
 A similar evaluation of the Cauchy Littlewood formula for the dual bases $\{h_\lambda\}$ and $\{m_\lambda\}$
  \be
\prod_{i j}(1- x_i y_j)^{-1} = \sum_{\lambda }h_\lambda({\bf x}) m_\lambda({\bf y})
\ee
gives the corresponding relation for the dual weight generating function
\be
\tilde{G}(\beta, \JJ) = \sum_{\lambda} \tilde{G}_\lambda \ m_\lambda(\JJ) \beta^{\abs{\lambda}},
\label{G_tilde__m_expansion}
\ee
with
\be
\tilde{G}_\lambda = h_\lambda({\bf c}).
 \label{tilde_G_lambda_h_c}
\ee
This approach is more fully detailed in \autoref{CL_approach}.
\er

Let $\cyc(\mu) \ss S_n$ denote the conjugacy class consisting of elements with cycle lengths equal 
to the parts $\mu_i$ of the partition $\mu$.
The number of elements in $\cyc(\mu)$ is:
\be
\abs{\cyc(\mu)}= \frac{\abs{\mu}!}{z_\mu}.
\label{conj_class_order}
\ee
The transpositions are denoted $(a\,b)$,  with $a$ and $b$  distinct elements of $\{1, \dots, n\}$,
ordered by convention with $a < b$.
\begin{definition}
A \emph{$d$-step path in the Cayley graph of $S_n$ (generated by all transpositions)} is an ordered sequence
\be
(h,\ 
(a_1 \, b_1) h,\ 
(a_2 \, b_2) (a_1 \, b_1) h,\ 
\ldots,\ 
(a_d \, b_d) \cdots (a_1 \, b_1) h)
\label{eq:path-def}
\ee
of $d+1$ elements of $S_n$, where consecutive elements differ by composition on the left with a transposition $(a_i \, b_i)$.
The path is said to \emph{start} at the permutation $h$ and \emph{end} at the permutation $g \deq (a_d \, b_d) \cdots (a_1 \, b_1) h$. If $h \in \cyc(\nu)$ and $g \in \cyc(\mu)$, the path will be referred to as going \emph{from $\cyc(\nu)$ to $\cyc(\mu)$}.
\end{definition}

\begin{definition}
If the sequence $b_1, b_2, \ldots, b_d$ is  either weakly or strictly increasing, then the path is said to be \emph{weakly}
(resp. \emph{strictly}) \emph{monotonic}.
\end{definition}

\begin{definition}
The \emph{signature} of the path \eqref{eq:path-def} is the partition $\lambda$ of weight $\abs{\lambda} = d$
whose parts are equal to the number of times each particular number $b_i$ appears in the sequence 
$b_1, b_2, \ldots, b_d$, expressed in weakly decreasing order.
\end{definition}

Let $\{C_\mu\}$ denote the basis of the center $\ZZ(\Cb[S_n])$ of the group algebra $\Cb[S_n]$ consisting of the sums over the elements of $\cyc(\mu)$
\be
C_\mu = \sum_{g \in \cyc(\mu)} g.
\ee

The following result follows from a simple counting argument.
\begin{lemma}
\label{generating_weighted_paths}
Multiplication by $m_\lambda(\JJ)$ defines an endomorphism of $\ZZ(\Cb[S_n])$ which, expressed in the $\{C_\mu\}$ basis, is given by
\be
m_\lambda(\JJ) C_\mu = {1\over |\mu|!} \sum_{\nu, \, \abs{\nu}=\abs{\mu}} m^\lambda_{\mu \nu} z_\nu C_\nu,
\ee
where $m^\lambda_{\mu \nu}$ is the number of monotonic $\abs{\lambda}$-step paths in the Cayley graph of $S_n$ from 
$\cyc(\nu)$ to $\cyc(\mu)$ with signature $\lambda$.
\end{lemma}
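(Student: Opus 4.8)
The plan is to compute the action of $m_\lambda(\JJ)$ on a class sum $C_\mu$ directly in the group algebra and read off the coefficients as path counts. First I would recall that for $g \in \cyc(\mu)$, expanding $m_\lambda(\JJ) g$ using the definition
\[
m_\lambda(\JJ) = \frac{1}{\abs{\aut(\lambda)}} \sum_{\sigma \in S_{\ell(\lambda)}} \sum_{1 \le b_1 < \cdots < b_{\ell(\lambda)} \le n} \JJ_{b_{\sigma(1)}}^{\lambda_1} \cdots \JJ_{b_{\sigma(\ell(\lambda))}}^{\lambda_{\ell(\lambda)}}
\]
and then substituting $\JJ_b = \sum_{a < b}(a\,b)$ for each factor, yields a sum of products of transpositions $(a_d\,b_d)\cdots(a_1\,b_1) g$, where reading the $b$-indices off the monomial produces exactly a weakly increasing sequence $b_1 \le \cdots \le b_d$ (within each $\JJ_b^{\lambda_i}$ block the second index is constant $= b$, and the blocks are ordered by the condition $b_1 < \cdots < b_{\ell(\lambda)}$), and the multiplicities of the distinct values are $\lambda_1, \ldots, \lambda_{\ell(\lambda)}$ in some order determined by $\sigma$. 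Thus each term of $m_\lambda(\JJ)g$ is precisely the endpoint of a monotonic $d$-step path starting at $g$ whose signature is $\lambda$, and conversely every such monotonic path appears.

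Next I would handle the overcounting factor $\abs{\aut(\lambda)}$: summing over all $\sigma \in S_{\ell(\lambda)}$ counts each assignment of the distinct $b$-values to multiplicities $\lambda_i$ exactly $\abs{\aut(\lambda)}$ times (the permutations fixing the multiset of exponents), so dividing by $\abs{\aut(\lambda)}$ gives a genuine sum over monotonic paths with signature $\lambda$, each counted once. Therefore
\[
m_\lambda(\JJ)\, g = \sum_{\text{monotonic paths of signature } \lambda \text{ starting at } g} (\text{endpoint}).
\]
Summing over $g \in \cyc(\mu)$ gives $m_\lambda(\JJ) C_\mu = \sum_{g'} n(g') g'$, where $n(g')$ counts pairs (starting point $g \in \cyc(\mu)$, monotonic path of signature $\lambda$ from $g$ to $g'$). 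Since $m_\lambda(\JJ)$ is central (the $\JJ_b$ generate a commutative subalgebra, and $m_\lambda(\JJ)$ as a symmetric function of them lies in the center — this is a standard fact about Jucys–Murphy elements), $n(g')$ depends only on the conjugacy class $\cyc(\nu)$ of $g'$, so $m_\lambda(\JJ) C_\mu = \sum_\nu n_\nu C_\nu$ with $n_\nu$ the number of monotonic signature-$\lambda$ paths from $\cyc(\mu)$ to a \emph{fixed} $g' \in \cyc(\nu)$, times $\abs{\cyc(\nu)}$ — wait, more carefully: $n_\nu = \#\{(g,\text{path}) : g \in \cyc(\mu),\ \text{path from } g \text{ to } g'\}$ for fixed $g'$.

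The final bookkeeping step is to match $n_\nu$ with $\frac{1}{\abs{\mu}!} m^\lambda_{\mu\nu} z_\nu$. By definition $m^\lambda_{\mu\nu}$ counts monotonic signature-$\lambda$ paths from $\cyc(\nu)$ to $\cyc(\mu)$; by reversing paths (or by the symmetry of the counting) one relates this to paths from $\cyc(\mu)$ to $\cyc(\nu)$. A double-counting identity — counting pairs (endpoint in $\cyc(\nu)$, path into it from $\cyc(\mu)$) two ways — gives $n_\nu \cdot \abs{\cyc(\nu)} = \#\{\text{all monotonic signature-}\lambda\text{ paths from }\cyc(\mu)\text{ to }\cyc(\nu)\}$, and similarly the total path count relates to $m^\lambda_{\mu\nu}$ and $\abs{\cyc(\mu)}$. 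Combining with $\abs{\cyc(\mu)} = \abs{\mu}!/z_\mu$ and $\abs{\cyc(\nu)} = \abs{\nu}!/z_\nu$ (and $\abs{\mu} = \abs{\nu} = n$) produces the claimed normalization $\frac{1}{|\mu|!} m^\lambda_{\mu\nu} z_\nu$. The main obstacle I anticipate is precisely this last normalization: keeping straight the direction of paths, which class sits at the start versus the end, and not dropping a factor of $z_\mu$ versus $z_\nu$ or a conjugacy-class size — the "simple counting argument" alluded to in the statement is conceptually easy but the constants must be tracked with care, and the convention that $m^\lambda_{\mu\nu}$ goes from $\nu$ to $\mu$ (opposite to the natural reading direction of $m_\lambda(\JJ) C_\mu$) is the likeliest place to introduce an error.
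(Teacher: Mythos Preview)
Your proposal is correct and is exactly the ``simple counting argument'' the paper alludes to without spelling out. The expansion of $m_\lambda(\JJ)$ into products of transpositions and the identification of each term with a monotonic path of signature $\lambda$ is the right idea, and your handling of the $\abs{\aut(\lambda)}$ overcount is accurate.

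One refinement on the bookkeeping you flag at the end: the phrase ``by reversing paths'' does not quite work, since reversing a weakly increasing sequence of $b$-values gives a weakly decreasing one, so monotonic paths from $\cyc(\mu)$ to $\cyc(\nu)$ do not biject by reversal with monotonic paths from $\cyc(\nu)$ to $\cyc(\mu)$. The cleanest fix is not to invoke symmetry at all but to choose the ordering of the commuting monomial so that the \emph{largest} $b$-value sits next to $g$; then the product $(a_d\,b_d)\cdots(a_1\,b_1)g$ read in path order has $b_1 \le \cdots \le b_d$, i.e.\ $g$ is the \emph{endpoint} of a monotonic path starting at some $g'$. Summing over $g\in\cyc(\mu)$ then makes the coefficient of $g'\in\cyc(\nu)$ equal to the number of monotonic signature-$\lambda$ paths from $g'$ into $\cyc(\mu)$, and summing over $g'\in\cyc(\nu)$ gives $m^\lambda_{\mu\nu}$ directly, with the $z_\nu/\abs{\mu}!$ coming from $\abs{\cyc(\nu)}^{-1}$ as you say. (Alternatively, your symmetry route does go through, since $m_\lambda(\JJ)$ is self-adjoint under the standard inner product on $\Cb[S_n]$ --- each $\JJ_b$ is a sum of involutions --- so $\langle m_\lambda(\JJ)C_\mu, C_\nu\rangle = \langle C_\mu, m_\lambda(\JJ)C_\nu\rangle$, giving the needed equality of total path counts in either direction.)
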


\br
Note that in the expansion above,  we must have $\abs{\mu} = \abs{\nu}$, but there is no restriction on $\abs{\lambda}$.
\er

\br
The enumerative constants $m^\lambda_{\mu \nu}$ may be interpreted in another way, that is perhaps more natural, since it puts no restrictions on the monotonicity of the path.
By careful application of the standard braid group action on the steps of a path in the Cayley graph, one can associate a unique monotonic path with the same starting and ending points and the same signature.
By counting the number of distinct rearrangements of a sequence $b_1, b_2, \ldots, b_d$ with signature $\lambda$, it follows that the total number $\tilde{m}^\lambda_{\mu \nu} $ of  $\abs{\lambda}$-step paths in the Cayley graph from $\cyc{\mu}$ to 
$\cyc(\nu)$ with signature $\lambda$ is related to its monotonic counterpart by
\be
\tilde{m}^\lambda_{\mu \nu} \deq \frac{\abs{\lambda}!}{\prod_{i=1}^{\ell(\lambda)} \lambda_i!} m^\lambda_{\mu \nu}.
\ee
\er
Assign weight
\be
\tilde{G}_\lambda \deq \left (\prod_{i=1}^{\ell(\lambda)}\lambda_i!\right) \,G_\lambda
\label{G_tilde_lambda}
\ee
to any such path of signature $\lambda$.
Then
\be
\tilde{F}^d_G (\mu, \nu) \deq \sum_{\lambda, \ \abs{\lambda}=d} \tilde{G}_\lambda \tilde{m}^\lambda_{\mu \nu}
= d! F^d_G(\mu, \nu)
\label{Fd_G_mu_nu}
\ee
is the weighted sum over all $d$-step paths, where
\be
F^d_G(\mu, \nu) \deq \frac{1}{\abs{\nu}!} \sum_{\lambda, \ \abs{\lambda}=d} G_\lambda m^\lambda_{\mu \nu}.
\ee

It then follows from Lemmas \ref{generating_weighted_monomial_sums} and \ref{generating_weighted_paths} that
\begin{proposition}
\label{weighted_path_sums}
\be
G(z,\JJ)C_\mu = \sum_{d=0}^\infty z^d \sum_{\nu, \ \abs{\nu}=\abs{\mu}} F_G^d(\mu, \nu) z_\nu C_\nu.
\label{GC_mu}
\ee

\end{proposition}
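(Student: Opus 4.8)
\noindent\emph{Proof strategy.} The plan is to obtain \autoref*{weighted_path_sums}\ as a direct composition of Lemmas~\ref{generating_weighted_monomial_sums} and~\ref{generating_weighted_paths}, followed by a purely formal reorganization of the resulting double sum according to the total degree $d=\abs{\lambda}$. First I would use Lemma~\ref{generating_weighted_monomial_sums} to substitute the expansion $G(z,\JJ)=\sum_{\lambda} G_\lambda\, m_\lambda(\JJ)\, z^{\abs{\lambda}}$, so that
\be
G(z,\JJ)\,C_\mu = \sum_{\lambda} G_\lambda\, z^{\abs{\lambda}}\, m_\lambda(\JJ)\,C_\mu .
\ee
Then I would apply Lemma~\ref{generating_weighted_paths} to each term $m_\lambda(\JJ)\,C_\mu$, which gives
\be
G(z,\JJ)\,C_\mu = \frac{1}{\abs{\mu}!}\sum_{\lambda} \ \sum_{\nu,\ \abs{\nu}=\abs{\mu}} G_\lambda\, m^\lambda_{\mu\nu}\, z^{\abs{\lambda}}\, z_\nu\, C_\nu .
\ee

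\noindent Next I would regroup the sum over all partitions $\lambda$ as a sum over $d\geq 0$ followed by a sum over partitions $\lambda$ with $\abs{\lambda}=d$, and interchange it with the sum over $\nu$, to get
\be
G(z,\JJ)\,C_\mu = \sum_{d=0}^\infty z^d \sum_{\nu,\ \abs{\nu}=\abs{\mu}} \left(\frac{1}{\abs{\mu}!}\sum_{\lambda,\ \abs{\lambda}=d} G_\lambda\, m^\lambda_{\mu\nu}\right) z_\nu\, C_\nu .
\ee
Since $\abs{\mu}=\abs{\nu}$ for every nonzero term, the parenthesized coefficient is precisely $F^d_G(\mu,\nu)$ by its definition, which is the asserted identity. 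So the body of the argument amounts to chasing the definitions of $m_\lambda(\JJ)$, $m^\lambda_{\mu\nu}$, $G_\lambda$ and $F^d_G(\mu,\nu)$ through the two lemmas.

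\noindent The only point that genuinely needs a remark is the legitimacy of these rearrangements, and this is where I would be slightly careful (though it is not a real obstacle). Because $m_\lambda(\JJ)$ vanishes whenever $\ell(\lambda)>n$ — the inner sum in its definition ranges over strictly increasing tuples $1\le b_1<\cdots<b_{\ell(\lambda)}\le n$, which is empty in that case — for each fixed $d$ only finitely many partitions $\lambda\vdash d$ contribute, and the sum over $\nu$ with $\abs{\nu}=\abs{\mu}$ is finite as well. Hence both sides are well-defined elements of the formal power series ring $\ZZ(\Cb[S_n])[[z]]$ whose coefficient of $z^d$ is a finite $\Cb$-linear combination of the $C_\nu$, so the reindexing is purely bookkeeping and no convergence question arises. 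I therefore expect the proof to be essentially two lines of substitution plus this one-sentence well-definedness observation.
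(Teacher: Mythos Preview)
Your proposal is correct and is exactly the argument the paper intends: the text simply states that \autoref{weighted_path_sums} ``follows from Lemmas~\ref{generating_weighted_monomial_sums} and~\ref{generating_weighted_paths}'', which is precisely the substitution-and-regrouping you carry out. Your additional remark on well-definedness in $\ZZ(\Cb[S_n])[[z]]$ is a harmless elaboration that the paper leaves implicit.
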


\subsection{\texorpdfstring{$\tau^{G(z)}({\bf t}, {\bf s})$}{tau G(z)} as generating function for weighted paths (proof of \autoref*{generating_function_weighted_paths})}
\label{weighted_combinatorial_expansions}

For each choice of $G(z)$, we define the corresponding 2D Toda $\tau$-function of generalized hypergeometric type (for $N=0$) by the formal series \eqref{tau_G_double_schur}. 
It follows from general considerations \cite{Takeb, KMMM,  OrS, HO1} that this is indeed a double KP $\tau$-function and that, when extended suitably to a lattice $\tau^{G(z)} (N, {\bf t}, {\bf s})$ of such $\tau$-functions, it
satisfies the corresponding system of Hirota bilinear equations of the 2D Toda hierarchy \cite{Ta, UTa, Takeb}.

Substituting the Frobenius character formula \eqref{frobenius_character}
for each of the factors $s_\lambda({\bf t}) s_\lambda({\bf s})$ into \eqref{tau_G_double_schur}, and the corresponding 
relation between the bases $\{C_\mu\}$ and $\{F_\lambda\}$
\be
F_\lambda = h(\lambda)^{-1} \sum_{\mu, \, \abs{\mu} = \abs{\lambda}} \chi_\lambda(\mu) C_\mu
\label{F_lambda_C_mu}
\ee
into eqs.~\eqref{GC_mu} and \eqref{GF_r_lambda}, equating coefficients in the $C_\mu$ basis,
and using the orthogonality relation for the characters
\be
\sum_{\mu, \, \abs{\mu}=\abs{\lambda}} \chi_\lambda(\mu) \chi_\rho(\mu) = z_\mu \delta_{\lambda \rho},
\label{character_orthog}
\ee
we obtain the expansion
\be
\tau^{G(z)} ({\bf t}, {\bf s})
= \sum_{d=0}^\infty \sum_{\substack{\mu, \nu, \\ \abs{\mu} = \abs{\mu}}} \frac{z^d}{d!} \tilde{F}^d_G(\mu, \nu) p_\mu({\bf t}) p_\nu({\bf s}), 
\label{tau_G_FPP}
\ee
with $ \tilde{F}^d_G(\mu, \nu)$ defined as in eq.~\eqref{Fd_G_mu_nu}, proving \autoref{generating_function_weighted_paths}. 

\subsection{Direct proof of Corollary \ref*{H_equals_F}: Cauchy Littlewood identity}
\label{CL_approach}

  There is an alternative way to prove the equalities \eqref{Hd_equals_Fd},
  starting from the Cauchy-Littlewood (CL) identity \cite{Mac}
  \bea
  \prod_{a} \prod_{b} (1- x_a y_b)^{-1} = \sum_\lambda s_\lambda ({\bf x})  s_\lambda ({\bf y}) 
    \label{CL_SS}
\eea
and its dual 
 \be
  \prod_a \prod_b (1 + x_a y_b)= \sum_\lambda s_\lambda ({\bf x})  s_{\lambda' }({\bf y}),
  \label{CLSS'}
\ee
(where $\lambda'$ denotes the dual partition whose Young diagram is the transpose
of that of $\lambda$).

In these  purely formal combinatorial identities,  the variables $\{x_a\}$ and $\{y_b\}$
  may be elements of any commutative ring, the set of indices $\{a\}$ and $\{b\}$,  need not be the
  same,  and they may be finite or infinite. Moreover, the identities may be rewritten in a variety of equivalent ways, involving not
  just the Schur functions, but any pair $\{u_\lambda\}$, $\{v_\lambda\}$ of dual bases for the ring of 
  symmetric functions under the  pairing defined by the standard scalar product, in which the Schur
  functions are orthonormal:
  \be
  (u_\lambda, v_\mu) =  \delta_{\lambda \mu}.
  \ee
  In particular, we have the following alternative forms \cite{Mac} of CL 
  \bea
       \prod_{a} \prod_{b} (1- x_a y_b)^{-1} &\&= \sum_\lambda h_\lambda ({\bf x})  m_\lambda ({\bf y}) 
       \label{CLhm}
                      \\
              &\&= \sum_\lambda f_\lambda ({\bf x})  e_\lambda ({\bf y}) 
                    \label{CLfe}
\eea
              and its dual:
 \bea 
      \prod_a \prod_b (1 + x_a y_b) &\&= \sum_\lambda e_\lambda ({\bf x})  m_\lambda ({\bf y}) 
            \label{DCLem}
      \\
                  &\&= \sum_\lambda m_\lambda ({\bf x})  e_\lambda ({\bf y}) 
                        \label{DCLme}
      \eea
Evaluating the identities \eqref{DCLem}, \eqref{DCLme}, with the indeterminates $\{x_a\}$
replaced by the parameters ${\bf  c}$  defining the weights generating function $G$ and its dual $\tilde{G}$  
while the indeterminates $\{y_b\}$ are chosen as the Jucys-Murphy elements $(\JJ_1, \dots, \JJ_n)$, and using \eqref{Gen_JM} gives
\bea
G( \beta,\JJ) &\&= \sum_{\lambda} e_\lambda({\bf c})  m_\lambda(\JJ)  \beta^{|\lambda|}  
\label{G_em}
\\
&\&=\sum_{\lambda}  m_\lambda ({\bf c}) e_\lambda (\JJ) \beta^{|\lambda|} .
\label{G_me}
\eea
Similarly evaluating \eqref{CLhm}, \eqref{CLfe} gives
\bea
\tilde{G}( \beta, \JJ) &\&= \sum_{\lambda} h_\lambda({\bf c}) m_\lambda(\JJ)  \beta^{|\lambda|} ,
\label{tildeG_mz}
\\
&\&=\sum_{\lambda}  f_\lambda ({\bf c}) e_\lambda (\JJ) \beta^{|\lambda|} .
\label{tildeG_fe}
\eea

The combinatorial definition of the  Hurwitz numbers is equivalent  to the formula  
\be
\frac{1}{n!} [\id] \prod_{i=1}^k C_{\mu^{(i)}} = H(\mu^{(1)}, \dots, \mu^{(k)}), 
\label{combin_hurwitz_id}
\ee
where $[\id] $ means taking the component of the identity element within the cycle sum basis $\{C_\nu\}$ of $\ZZ(\Cb[S_n])$, and $n = \abs{\mu^{(i)}}$ is the number of sheets in the enumerated covers.
More generally, we have the following  expression for the product
$\prod_{i=1}^k C_{\mu^{(i)}}$ in the $\{C_\nu \}$ basis 
\be
\prod_{i=1}^k C_{\mu^{(i)}} =  \sum_{\nu, \, |\mu|= |\mu^{(i)}} H(\mu^{(1)}, \dots , \mu^{(k)}, \nu) z_\nu C_{\nu}.
\label{combin_hurwitz_sum}
\ee
This  is equivalent to the Frobenius-Schur formula eq.~(\ref{Frob_Schur}),  as can be seen by substituting the inverse 
\be
C_{\mu^{(i)}} = z_{\mu^{(i)}}^{-1}\sum_{\lambda^{(i)}, \, |\lambda^{(i)}|=|\mu^{(i)}|} h_\lambda \chi_\lambda(\mu^{((i)}) F_\lambda
\ee
of the basis change relation (\ref{F_lambda_C_mu}) into (\ref{combin_hurwitz_sum}), using the orthogonal idempotency relations
\be
F_{\lambda^{(i)}} F_{\lambda^{(j)}} = \delta_{\lambda^{(i)} \lambda^{(j)}} F_{\lambda^{(i)}}
\ee
and reverting to the $\{C_\nu\}$ basis.

Similarly to \autoref{generating_weighted_paths}, we have the following expression for $e_\lambda(\JJ) C_\mu $
\begin{lemma}
\label{generating_weighted_covers}
Multiplication of the elements of the basis $\{C_\mu\}$  by $e_\lambda(\JJ)$  gives
\be
e_\lambda(\JJ) C_\mu = \sum_{\substack {\mu^{(1)}, \dots , \, \mu^{(k)} , \\ \{\ell^*(\mu^{(i}) = \lambda_i\}} }
\prod_{i=1}^k C_{\mu^{(i)}} C_\mu.
\label{e_lambda_C_mu}
\ee
\end{lemma}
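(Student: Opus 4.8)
The plan is to deduce the identity from a single classical fact about Jucys--Murphy elements --- that each elementary symmetric polynomial $e_r(\JJ_1,\dots,\JJ_n)$ equals the sum of the class sums $C_\mu$ over partitions of colength $\ell^*(\mu)=r$ --- and then to obtain the stated formula simply by distributing a product of such sums. Thus the first step is to record Jucys's identity (see \cite{Ju})
\be
\prod_{a=1}^n (x + \JJ_a) = \sum_{\sigma\in S_n} x^{c(\sigma)}\, \sigma ,
\ee
where $c(\sigma)$ is the number of cycles of $\sigma$ (fixed points included). It is proved by induction on $n$: assuming it in $\Cb[S_{n-1}]\subset\Cb[S_n]$ and multiplying on the right by $x+\JJ_n = x + \sum_{a=1}^{n-1}(a\,n)$, the term $x$ contributes the permutations of $S_n$ fixing $n$ (which acquire one extra cycle), while the terms $(a\,n)$ contribute, each exactly once, the permutations of $S_n$ that move $n$, with the correct cycle count. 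Comparing coefficients of $x^{\,n-r}$, and using that a permutation of cycle type $\mu\vdash n$ has $c(\sigma)=\ell(\mu)$, so $n-c(\sigma) = \abs{\mu}-\ell(\mu) = \ell^*(\mu)$, gives
\be
e_r(\JJ_1,\dots,\JJ_n) = \sum_{\mu,\ \ell^*(\mu)=r} C_\mu .
\ee
In particular $e_r(\JJ)\in\ZZ(\Cb[S_n])$, so multiplication by $e_\lambda(\JJ) = \prod_{i=1}^{\ell(\lambda)} e_{\lambda_i}(\JJ)$ is a well-defined endomorphism of the centre.

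The second step is pure bookkeeping. Since all class sums lie in the commutative ring $\ZZ(\Cb[S_n])$,
\bea
e_\lambda(\JJ)\,C_\mu
&\&= \left(\prod_{i=1}^{\ell(\lambda)} e_{\lambda_i}(\JJ)\right) C_\mu \\
&\&= \prod_{i=1}^{\ell(\lambda)}\left(\sum_{\mu^{(i)},\ \ell^*(\mu^{(i)})=\lambda_i} C_{\mu^{(i)}}\right) C_\mu \\
&\&= \sum_{\substack{\mu^{(1)},\dots,\mu^{(k)}\\ \ell^*(\mu^{(i)})=\lambda_i}}\ \prod_{i=1}^{k} C_{\mu^{(i)}}\, C_\mu ,
\eea
with $k=\ell(\lambda)$, which is exactly \eqref{e_lambda_C_mu}; the sum runs over ordered tuples $(\mu^{(1)},\dots,\mu^{(k)})$ matched to the ordered parts $\lambda_1\ge\cdots\ge\lambda_k$, precisely as the expansion of $\prod_i e_{\lambda_i}(\JJ)$ produces them.

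The only substantive point is Jucys's identity itself, and the thing to watch there is that in $e_r(\JJ)$ each permutation with the prescribed number of cycles occurs with coefficient exactly $1$ --- i.e.\ that the strictly increasing transposition factorizations $(a_1\,b_1)\cdots(a_r\,b_r)$, $b_1<\cdots<b_r$, aggregate without cancellation to the full class sums. If a self-contained argument is preferred, this step can be replaced by a counting argument parallel to \autoref{generating_weighted_paths}: $e_\lambda(\JJ)C_\mu$ enumerates strictly monotonic paths whose sequence of second elements is cut into $k$ consecutive strictly increasing blocks of lengths $\lambda_1,\dots,\lambda_k$, and contracting block $i$ to a single permutation --- necessarily of some cycle type $\mu^{(i)}$ with $\ell^*(\mu^{(i)})=\lambda_i$ --- recovers the same sum over tuples. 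I would present the Jucys route as primary, relegating the combinatorial version to a remark.
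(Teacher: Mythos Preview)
Your proof is correct and follows essentially the same route as the paper's: both arguments rest on Jucys's identity $e_r(\JJ)=\sum_{\mu,\ \ell^*(\mu)=r} C_\mu$ and then simply expand the product $e_\lambda(\JJ)=\prod_i e_{\lambda_i}(\JJ)$. The paper merely cites \cite{Ju} for that identity, whereas you supply a short inductive proof of $\prod_a(x+\JJ_a)=\sum_{\sigma} x^{c(\sigma)}\sigma$ and extract the coefficient of $x^{n-r}$, making your version more self-contained; the optional combinatorial reformulation you sketch at the end is extra and not in the paper.
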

\begin{proof}
As shown in \cite{Ju}, the elementary symmetric polynomials $e_k(\JJ)$ in the Jucys-Murphy elements 
$(\JJ_1, \dots, \JJ_n)$  are equal to the sum over the elements of $S_n$ consisting of  products of $n-k$ cycles   
\be
e_k(\JJ)  = \sum_{\substack{\mu, \\ \ell^*(\mu) = k}} C_\mu .
\ee
Substitution in 
\be
e_\lambda(\JJ)  =  \prod_{i=1}^k  e_{\lambda_i}(\JJ)
\ee
gives  (\ref{e_lambda_C_mu}).
\end{proof}

It follows from eqs.~\eqref{W_G_def}, \eqref{Hd_G}, \eqref{G_lambda_e_c},  \eqref{Fd_G_def},
\eqref{combin_hurwitz_sum} and Lemmas \ref{generating_weighted_paths} and \ref{generating_weighted_covers} that
\bea 
F^d_G(\mu, \nu) &\&= [\beta^d C_{\nu}] \sum_{\lambda} e_\lambda({\bf c})  m_\lambda(\JJ)  \beta^{|\lambda|} C_\mu, \\
H^d_G(\mu, \nu) &\&= [\beta^d C_{\nu}]  \sum_{\lambda}  m_\lambda ({\bf c}) e_\lambda (\JJ) \beta^{|\lambda|}C_\mu,
\label{F_d_H_d}
\eea
and hence, by \eqref{G_em},  \eqref{G_me}, they are equal. Similarly applying \eqref{tildeG_mz}  and \eqref{tildeG_fe} proves the dual case from 
\bea
 F^d_{\tilde{G}}(\mu, \nu) &\&= [\beta^d C_{\nu}] \sum_{\lambda} h_\lambda({\bf c})  m_\lambda(\JJ)  \beta^{|\lambda|} C_\mu,   \\
H^d_{\tilde{G}}(\mu, \nu) &\&= [\beta^d C_{\nu}]  \sum_{\lambda}  f_\lambda ({\bf c}) e_\lambda (\JJ) \beta^{|\lambda|}C_\mu .
\label{tilde_F_d_H_d}
\eea

\subsection{Fermionic representation}
\label{fermionic_representation}

Double KP $\tau$-functions of the form \eqref{tau_G_double_schur} also have a Fermionic representation
\cite{OrS, GH, HO1, HO2}
\be
\tau^G({\bf t}, {\bf s}) = \bra{0} \hat{\gamma}_+({\bf t}) \hat{C}_G(\beta) \hat{\gamma}_-({\bf s})\ket{0}
\ee
where the Fermionic operator $ \hat{C}_G $, $ \hat{\gamma}_+({\bf t}) $ and $ \hat{\gamma}_-({\bf s}) $ are
defined by
\bea
\hat{C}_G &\&= e^{\sum_{j=-\infty}^\infty T^G_j (\beta) \no{\psi_j \psi_j^\dag}} \\
\hat{\gamma}_+({\bf t}) &\& = e^{\sum_{i=1}^\infty t_i J_i}, \quad \hat{\gamma}_-({\bf s}) = e^{\sum_{i=1}^\infty s_i J_i},
\quad J_i = \sum_{k\in \Zb} \psi_k \psi^\dag_{k+i}, \quad i \in \Zb,
\eea
in terms of the Fermionic creation and annihilation operators $\{\psi_i, \psi_i^\dag\}_{i\in \Zb}$,
acting on the Fermionic Fock space,
\be
\FF = \bigoplus_{N \in \Zb} \FF_N \quad (N= \text{vacuum charge}),
\ee
satisfying the usual anticommutation relations
\be
[\psi_i, \, \psi_j^\dag]_+ = \delta_{ij}
\ee
and vacuum state $\ket{0}$ vanishing conditions
\be
\psi_i \ket{0} = 0, \quad \text{for $i < 0$}, \quad \psi^\dag_i \ket{0} = 0, \quad \text{for $i\ge 0$},
\ee
and the parameters $T^G_j(\beta)$ are defined by
\be
T^G_j(\beta) = \sum_{k=1}^j \ln G(\beta k), \quad T^G_0(\beta) = 0, \quad T^G_{-j}(\beta ) = -\sum_{k=0}^{j-1} \ln G(-\beta k) \quad\text{for $j>0$}.
\ee

This follows from the fact that $ \hat{C}_G $ is diagonal in the basis $\{\ket{\lambda; N}\}$
\be
\hat{C}_G(\beta) \ket{\lambda; N} = r_\lambda^{(G, \beta)}(N) \ket{\lambda; N}
\label{CG_lambda}
\ee
with eigenvalues
\bea
r_\lambda^{(G, \beta)}(N) &\&\deq r^{(G,\beta)}_0(N) \prod_{(i,j)\in \lambda} G(\beta(N+ j-i)), \\
r^{(G, \beta)}_0(N) = \prod_{j=1}^{N-1} G((N-j)\beta)^j, \quad r_0(0) &\& = 1, \quad r^{(G, \beta)}_0(-N) = \prod_{j=1}^{N} G((j-N)\beta)^{-j},
\quad N>1.\cr
&\&
\eea
Eq.~\eqref{CG_lambda} means that  the map
\be
\begin{split}
\grF \colon \bigoplus_{n \geq 0} \ZZ(\Cb[S_n]) &\to \FF_0 \\
\grF: F_\lambda &\mapsto \frac{1}{h_\lambda} \ket{\lambda; 0}
\end{split}
\ee
 intertwines the action of the abelian group of elements of the form $\hat{C}_G$ on $\FF_0$ with the action of the group of elements $G(z, \JJ) \in \ZZ(C(S_n])$ by multiplication on the direct sum of the centers $\ZZ(C[S_n])$ of the $S_n$ group algebras \cite{GH}.

More generally, using the charge $N$ vacuum state
\be
\ket{N} = \psi_{N-1} \cdots \psi_0 \ket{0}, \quad \ket{-N} = \psi^\dag_{-N} \cdots \psi^\dag_{-1} \ket{0},
\quad N\in \Nb^+,
\ee
we may define a 2D Toda lattice of $\tau$-functions by
\bea
\tau^{G, \beta)}(N, {\bf t}, {\bf s}) &\& = \bra{N} \hat{\gamma}_+({\bf t}) \hat{C}_G (\beta)\hat{\gamma}_-({\bf s})\ket{N} \\
&\& \cr
&\& = \sum_{\lambda} r_\lambda^{G}(N)s_\lambda({\bf t}) s_\lambda({\bf s}).
\eea
These satisfy the infinite set of Hirota bilinear equations for the 2D Toda lattice hierarchy \cite{UTa, Ta, Takeb}.

\section{Examples: classical and quantum}
\label{examples}
\subsection{Classical counting of paths: combinatorial Hurwitz numbers}
\label{classical_hurwitz}
The following four examples were studied in ref.~\cite{Ok, GGN1, GGN2, GH, HO2}.
The interpretation of the associated $\tau$-functions as generating functions for weighted enumeration of paths and as Hurwitz numbers for various types of branched covers of $\Cb\Pb^1$ of fixed genus will be recalled in what follows.

\bex {\bf Simple double Hurwitz numbers \cite{Ok}.}
\label{ex1}
This case is Okounkov's simple double Hurwitz numbers \cite{Ok}, which enumerate $n = \abs{\mu}= \abs{\nu}$ sheeted branched coverings of the Riemann sphere, with ramification types $\mu$ and $\nu$ at $0$ and $\infty$, $d$ additional simple branch points and genus $g$ given by the Riemann-Hurwitz formula
\be
2- 2g = \ell(\mu) + \ell(\nu) - d.
\label{riemann_hurwitz_bis}
\ee
The weight generating function $G(z)$ in this case is just the exponential function. The parameters entering into the associated $\tau$-function are as follows.
\bea
G(z) &\& = \exp(z)\deq e^z = \sum_{i=1}^\infty \frac{z^i}{i!}, \quad
\exp_i = \frac{1}{i!}, \quad \exp_\lambda = \frac{1}{\prod_{i=1}^{\ell(\lambda)} \lambda_i !} \\
\exp(\beta, \JJ)&\& = e^{\beta \sum_{b=1}^n \JJ_b} = \sum_{k=0}^\infty \frac{\beta^d}{d!} \left(\sum_{b=1}^n \JJ_b\right)^d \\
&\& = \sum_{d=0}^\infty \beta^d \sum_{\lambda, \, \abs{\lambda} =d}\left(\prod_{i=1}^{\ell(\lambda)}(\lambda_i)! \right)^{-1}
m_\lambda(\JJ) \\
r^{(\exp, \beta)}_j &\& = e^{j \beta}, \quad r_\lambda^{\exp} (\beta)= e^{\frac{\beta}{2} \sum_{i=1}^{\ell(\lambda)}\lambda_i (\lambda_i - 2i +1)},
\quad T^{\exp}_j = \frac{1}{2} j(j+1) \beta.
\eea
\br
This case is, in a sense, exceptional because the infinite product form \eqref{G_inf_prod} of the generating 
functions must be interpreted as a limit
\be
e^z =\lim_{k \ra \infty} (1+ \frac{1}{k} z)^k
\ee
and the sums \eqref{tau_G_double_schur}, \eqref{tau_G_H} understood in the weak sense (i.e., with the limit taken under the summation signs).
\er

For this case
\be
\tilde{F}_{\exp}^d(\mu, \nu) = d !\sum_{\lambda, \ \abs{\lambda}=d} \left(\prod_{j=1}^{\ell(\lambda)} (\lambda !)\right)^{-1}m^\lambda_{\mu \nu}
= \sum_{\lambda, \ \abs{\lambda}=d} \tilde{m}^\lambda_{\mu \nu}
\ee
is the total number of $(d+1)$-term products $(a_1\, b_1) \cdots (a_d\, b_d) g$ such that $g \in \cyc(\mu)$ and the product $(a_1\, b_1) \cdots (a_d\, b_d) g \in \cyc(\nu)$; i.e., the number of (unordered) sequences of $d$ transpositions leading from the class of type $\mu$ to the class of type $\nu$. Equivalently, it may be viewed as the number of $d$-step paths in the Cayley graph of $S_n$ generated by all transpositions, from the conjugacy class of cycle type $\mu$ to the class of cycle type $\nu$.
\eex


\bex{\bf Monotonic double Hurwitz numbers \cite{GH, HO2}.}
\label{ex2}
\bea
G(z) &\&= E(z)\deq 1+ z, \quad
E_i =\delta_{1 i}, \text{for } i\ge1, \quad e_\lambda = \delta_{\lambda, (1^{\abs{\lambda}})} \\
E(\beta, \JJ) &\&= \prod_{a=1}^n (1+\beta \JJ_a), \\
r^{(E,\beta)}_j&\& = 1 + \beta j, \quad r^{(E, \beta)}_\lambda = \prod_{(i,j)\in \lambda} (1 + \beta (j-i)) = \beta^{\abs{\lambda}} \, (1/\beta)_{\lambda} \\
T^E_j(\beta) &\& = \sum_{i=1}^j \ln(1+i \beta), \quad T^E_{-j}(\beta) = -\sum_{i=1}^{j-1}\ln(1-i \beta), \quad j > 0,
\eea
where
\be
(u)_\lambda := \prod_{i=1}^{\ell(\lambda)}(u-i+1)_{\lambda_i}
\ee
is the multiple Pochhammer symbol corresponding to the partition $\lambda$.

In this case we have
\be
\sum_{\lambda, \abs{\lambda}=d} e_\lambda m_\lambda(\JJ) = \sum_{b_1 < \dots< b_d} \JJ_{b_1} \cdots \JJ_{b_d}
\ee
and the coefficient $ F_{E}^d(\mu, \nu)$ is
\be
F_{E}^d(\mu, \nu) = m^{(1)^{d}}_{\mu \nu},
\label{F_Ed}
\ee
which enumerates all $d$-step paths in the Cayley graph of $S_n$ starting at an element in the conjugacy class of cycle type $\nu$ and ending in the class of type $\mu$, that are strictly monotonically increasing in their second elements \cite{GH}.

Equivalently \cite{HO2}, this equals the double Hurwitz numbers for Belyi curves,
\cite{Z, KZ, AC1}, which enumerate $n$-sheeted branched coverings of the Riemann sphere having three ramification points, with ramification profile types $\mu$ and $\nu$ at $0$ and $\infty$, and a single additional branch point, with ramification profile $\mu^{(1)}$ having colength
\be
\ell^*(\mu^{(1)}) \deq n - \ell(\mu^{(1)}) = d
\ee
i.e., with $n-d$ preimages. The genus is again given by the Riemann-Hurwitz formula \eqref{riemann_hurwitz_bis}.
\eex


\bex{\bf Multimonotonic double Hurwitz numbers \cite{HO2}.}
\label{ex3}
\bea
G(z) &\&=E(z)^k\deq (1+ z)^k, \quad E^k_i = \binom{k}{i}, \quad
e_\lambda^k = \prod_{i=1}^{\ell(\lambda)}
\binom{k}{\lambda_i} \\
E(\beta, \JJ)^k &\& = \prod_{a=1}^n (1+\beta \JJ_a)^k, \\
r^{(E^k, \beta)}_j&\&= (1 +  \beta j)^k, \quad r_\lambda^{(E^k, \beta)}
= \prod_{(i,j)\in \lambda} (1 + \beta (j-i))^k = \beta ^{k \abs{\lambda}} ((1/\beta)_{\lambda})^{k}, \\
T^{E^k}_j (\beta)&\& = k \sum_{i=1}^j \ln(1+iz), \quad T^{E^k}_{-j}(\beta) = - k\sum_{i=1}^{j -1}\ln(1-i \beta), \quad j > 0.
\eea

\bea
\sum_{\lambda, \abs{\lambda}=d} E^k_\lambda m_\lambda(\JJ) &\&= \sum_{\lambda, \abs{\lambda}=d} \left(\prod_{i=1}^{\ell(\lambda)} \binom{k}{\lambda_i}\right) m_\lambda(\JJ) \cr
&\& = [z^d] \prod_{a=1}^n(1+z J_a)^k
\eea
where $[z^d]$ means the coefficient of $z^d$ in the polynomial.

The coefficient
\be
F_{E^k}^d(\mu, \nu) = \sum_{\lambda, \abs{\lambda}=k} \left(\prod_{i=1}^{\ell(\lambda)} \binom{k}{\lambda_i}\right) m^\lambda_{\mu \nu}
\label{F_Ed}
\ee
is the number of $(d+1)$-term products $(a_1\, b_1) \cdots (a_d\, b_d) g$ such that $g\in \cyc(\mu)$, while the product $(a_1\, b_1) \cdots (a_d\, b_d) g \in \cyc(\nu)$, and which consist of a product of $k$ consecutive subsequences, each of which is strictly monotonically increasing in the second elements of each $(a_i\, b_i)$ \cite{GH, HO2}.

Equivalently, they are double Hurwitz numbers that enumerate $n$-sheeted branched coverings of the Riemann sphere with ramification profile types $\mu$ and $\nu$ at $0$ and $\infty$, and (at most) $k$ additional branch points,
such that the sum of the colengths of their ramification profile types
(i.e., the ``defect" in the Riemann Hurwitz formula \eqref{riemann_hurwitz_bis}) is equal to $d$:
\be
\sum_{i=1}^k \ell^*(\mu^{(i)}) = kn - \sum_{i=1}^k \ell(\mu^{(i)}) = d.
\ee
This amounts to counting covers with the genus fixed by \eqref{riemann_hurwitz_bis}
and the number of additional branch points fixed at $k$, but no restriction on their simplicity.

\eex


\bex{\bf Weakly monotonic double Hurwitz numbers \cite{GGN1, GGN2, GH}.}
\label{ex4}
 This is the dual $\tilde{E}$ of the weight generating function of example \ref{ex2}.
\bea
G(z) &\&= H(z)\deq \frac{1}{1- z} =\tilde{E}(z), \quad H_i = 1 \text{ for } i\ge1, \quad h_\lambda =1 \quad  \forall \lambda \\
H(\beta, \JJ) &\&= \prod_{a=1}^n (1-\beta \JJ_a)^{-1}, \\
r^{(H, \beta)}_j&\& = (1 - \beta j )^{-1}, \quad r_\lambda^H (\beta) =
\prod_{(i,j)\in \lambda} (1 - \beta (j-i))^{-1} = (-\beta)^{-\abs{\lambda}}((-1/\beta)_\lambda)^{-1}, \\
T^{H}_j (\beta)&\& = - \sum_{i=1}^j \ln(1-i \beta), \quad T^E_{-j}(\beta) = \sum_{i=1}^{j -1}\ln(1+i \beta), \quad j > 0.
\eea

We now have
\be
\sum_{\lambda, \abs{\lambda}=d} G_\lambda m_\lambda(\JJ) = \sum_{b_1 \le \dots \le b_d} \JJ_{b_1} \cdots \JJ_{b_d}.
\ee
and
\be
F_{H}^d(\mu, \nu) = \sum_{\lambda, \ \abs{\lambda}=k} m^{\lambda}_{\mu \nu}
\ee
is the number of number of $(d+1)$-term products $(a_1\, b_1) \cdots (a_d\, b_d) g$ that are weakly monotonically increasing, such that $g\in \cyc(\mu)$ and
$(a_1\, b_1) \cdots (a_d\, b_d) g \in \cyc(\nu)$.
These enumerate $d$-step paths in the Cayley graph of $S_n$ from an element in the conjugacy class of cycle type $\mu$ to the class of cycle type $\nu$, that are weakly monotonically increasing in their second elements \cite{GH}.

Equivalently, they are double Hurwitz numbers for $n$-sheeted branched coverings of the Riemann sphere with branch points at $0$ and $\infty$ having ramification profile types $\mu$ and $\nu$,
and an arbitrary number of further branch points, such that the sum of the colengths of their ramification profile lengths is again equal to $d$
\be
\sum_{i=1}^k \ell^*(\mu^{(i)}) = kn - \sum_{i=1}^k \ell(\mu^{(i)}) = d.
\ee
The latter are counted with a sign, which is $ (-1)^{n+d}$ times the parity of the number of branch points \cite{HO2}. The genus is again given by \eqref{riemann_hurwitz_bis}.

\eex

\subsection{Quantum combinatorial Hurwitz numbers}

In this subsection, we introduce three new examples involving weighted enumeration of paths in the Cayley graph of the symmetric group,
which can also be interpreted as weighted enumeration of configurations of branched covers of the Riemann sphere.
Each can be viewed as a $q$-deformation of one the classical examples.
Because of the similarity of the weighting to that of a quantum Bosonic gas, the resulting weighted sums will be identified as {\it quantum} Hurwitz numbers.


\bex{\bf $E(q)$. Quantum Hurwitz numbers (i).}
\label{ex5}
\bea
G(z) =&\&E(q, z) \deq \prod_{k=0}^\infty (1+ q^k z) = \sum_{i=0}^\infty E_i(q) z^i,
\label{GEq} \\
E_i(q)&\& \deq \frac{q^{\frac{1}{2}i(i-1)}}{\prod_{j=1}^i (1-q^j)}, \quad i \ge 1,
\quad E_\lambda(q) = \prod_{i=1}^{\ell(\lambda)}\frac{q^{\frac{1}{2}\lambda_i(\lambda_i -1)}}{\prod_{j=1}^{\lambda_i} (1-q^j)} \\
E(q, \beta, \JJ) &\&= \prod_{a=1}^n \prod_{k=0}^\infty (1+q^k \beta \JJ_a), \\
r^{(E(q), \beta)}_j &\&= \prod_{k=0}^\infty (1+ q^k \beta j), \\
r^{(E(q), \beta)}_\lambda &\&= \prod_{k=0}^\infty \prod_{(i,j)\in \lambda} (1+ q^k \beta (j-i))
= \prod_{k=0}^\infty (zq^k)^{\abs{\lambda}} (1/(zq^k))_\lambda
\\
T^{E(q)}_j (\beta)&\& = - \sum_{i=1}^j\Li_2(q, -\beta i), \quad T^{E(q)}_{-j}(\beta) = \sum_{i=0}^j\Li_2(q, \beta i), \quad j >0.
\label{rEq}
\eea
This weight generating function is related to the quantum dilogarithm function by
\be
E(q, z) = e^{-\left({1\over 1-q}\right)\Li_2(q, -z)}, \quad \Li_2(q, z) \deq (1-q) \sum_{k=1}^\infty \frac{z^k}{k (1- q^k)}.
\ee
The coefficients $E_i(q)$ are themselves generating functions for the number of partitions having exactly $i$ or $i-1$ parts, all distinct.
The coefficient $ F_{E(q)}^d(\mu, \nu)$ is
\be
F_{E(q)}^d(\mu, \nu) = \sum_{\lambda, \ \abs{\lambda}=d} E_\lambda (q) \, m^\lambda_{\mu \nu}
= (d!)^{-1} \sum_{\lambda, \ \abs{\lambda}=d} \tilde{E}_\lambda (q) \, \tilde{m}^\lambda_{\mu \nu}
\label{F_Eqd}
\ee
Its combinatorial interpretation is, as explained in \autoref{weighted_expansions},
as a weighted enumeration of paths in the Cayley graph of $S_n$ from the conjugacy class of type $\mu$ to the class $\nu$, where paths of signature $\lambda$ have weighting factor $E_\lambda(q)$. The geometric interpretation will be detailed in \autoref{hurwitz_Eq}. 
\br
The definition of the quantum dilogarithm is not uniform in the literature. What is referred to in \cite{FK} as the quantum dilogarithm is
\be
\Psi(z) \deq E(q, -z) = e^{-{1\over 1-q}\Li_2(q, z)}.
\ee
The notation $\Li_2(q,z)$ used here is natural since, for $q = e^{-\epsilon}, \abs{q} <1$, the leading term as $\epsilon \ra 0$ coincides with the classical dilogarithm in the rescaled argument $\frac{z}{\epsilon}$:
\be
\Li_2(q,z) \sim \sum_{m=1}^\infty \frac{(\frac{z}{\epsilon})^m}{m^2} = \Li_2\left(\frac{z}{\epsilon}\right).
\ee
\er
A slight modification of this is obtained by removing the $q^0$ term in the product,
giving the weight generating function
\be
E'(q, z)\deq \prod_{k=1}^\infty (1+ q^k z).
\label{GE'q}
\ee
The relevant coefficients in the weighted expansions are thus modified to
\bea
E'_i(q)&\& \deq \frac{q^{\frac{1}{2}i(i+1)}}{\prod_{j=1}^i (1-q^j)}, \quad i \ge 1,
\quad E'_\lambda(q) = \prod_{i=1}^{\ell(\lambda)}\frac{q^{\frac{1}{2}\lambda_i(\lambda_i +1)}}{\prod_{j=1}^{\lambda_i} (1-q^j)} \\
E'(q, \JJ) &\&= \prod_{a=1}^n \prod_{k=1}^\infty (1+q^k z\JJ_a), \\
r^{(E'(q), \beta)}_j &\&= \prod_{k=1}^\infty (1+ q^k \beta j), \\
r^{(E'(q), \beta)}_\lambda &\&= \prod_{k=1}^\infty \prod_{(i,j)\in \lambda} (1+ q^k \beta (j-i))
= \prod_{k=1}^\infty (\beta q^k)^{\abs{\lambda}} (1/(\beta q^k))_\lambda
\eea
The coefficient $ F_{E'(q)}^d(\mu, \nu)$ is
\be
F_{E'(q)}^d(\mu, \nu) = \sum_{\lambda, \ \abs{\lambda}=d} E'_\lambda (q) \, m^\lambda_{\mu \nu}
= (d!)^{-1} \sum_{\lambda, \ \abs{\lambda}=d} \tilde{E}'_\lambda (q) \, \tilde{m}^\lambda_{\mu \nu}
\ee
Its combinatorial interpretation  is the same as in the previous case, with the weighting factor 
$E_\lambda(q)$ replaced by $E'_\lambda(q)$. 

It may be viewed as weighted sums over branched covers, in which the weights are closely related to distributions for
Bosonic gases, with the parameter $q$ interpreted as
\be
q = e^{-{ \hbar \omega_0 \over k_B T}},
\ee
for some fundamental frequency $\omega_0$ and linear energy spectrum,
with the energy levels proportional to the total ramification defect.
The case $E'(q)$ is obtained by removal of the zero energy levels, giving a distribution that more closely resembles that of the Bosonic gas. In the classical limit $q\ra 1$, we recover \autoref{ex1}.
For a detailed study of the leading term semiclassical asymptotics, see \cite{HOr}.

\eex
\bex{\bf $H(q)$. Quantum Hurwitz numbers (ii).}
\label{ex6}
\noindent This is the dual $\tilde{E}(q)$ of the weight generating function of \autoref{ex5}.

\bea
\tilde{G}(z) =H(q,z)&\&\deq \prod_{k=0}^\infty (1-q^k z)^{-1} =\tilde{E}(q,z) = \sum_{i=0}^\infty H_i(q)z^i,
\label{GHq} \\
H_i(q) &\&\deq \frac{1}{\prod_{j=1}^i (1-q^j)},
\quad H_\lambda(q) = \prod_{i=1}^{\ell(\lambda)}\frac{1}{\prod_{j=1}^{\lambda_i} (1-q^j)} \\
H(q, \beta, \JJ) &\&= \prod_{k=0}^\infty\prod_{a=1}^n (1- q^k \beta \JJ_a)^{-1}, \\
r^{(H(q), \beta)}_j &\&= \prod_{k=0}^\infty (1- q^k \beta j)^{-1}, \\
r^{(H(q), \beta)}_\lambda &\&= \prod_{k=0}^\infty \prod_{(i,j)\in \lambda} (1- q^k \beta (j-i)) ^{-1}
= \prod_{k=0}^\infty (-1/(\beta q^k))^{-\abs{\lambda}} (-1/(\beta q^k))^{-1}_\lambda. \\
T^{H(q)}_j(\beta) &\& = \sum_{i=1}^j\Li_2(q, \beta i), \quad T^{H(q)}_{-j}(\beta) = -\sum_{i=1}^{j-1}\Li_2(q, - \beta i), \quad j >0.
\label{rHq}
\eea

The coefficients $H_i(q)$ of the weight generating function in this case are generating functions for the number of partitions having at most $i$ parts, which need not be distinct. The modification corresponding to removing the zero energy level state is based similarly on the generating function
\be
H'(q, z) \deq \prod_{k=1}^\infty (1- q^k z)^{-1}.
\label{GH'q}
\ee

The coefficient $ F_{H}^d(\mu, \nu)$ is
\be
F_{H(q)}^d(\mu, \nu) = \sum_{\lambda, \ \abs{\lambda}=d} H_\lambda (q)\, m^\lambda_{\mu \nu}
= (d!)^{-1} \sum_{\lambda, \ \abs{\lambda}=d} \tilde{H}_\lambda (q)\, \tilde{m}^\lambda_{\mu \nu}
\label{F_Hqd}
\ee
Its combinatorial interpretation  is the same as in the previous cases,  as the weighted enumeration of paths in the Cayley graph of $S_n$ from the conjugacy class of type $\mu$ to the class $\nu$ where paths of signature $\lambda$ have weighting factor $H_\lambda(q)$.

The geometric interpretation is detailed in \autoref{hurwitz_Hq}. It may be viewed as a signed version of the weighted Hurwitz numbers associated to the Bose gas interpretation, with the sign again determined by the parity of the number of branch points. In the classical limit $q\ra 1$, we again recover \autoref{ex1}.

\eex


\bex{\bf $Q(q,p)$. Double quantum Hurwitz numbers.}
\label{ex7}
\noindent This is the composite of examples \ref{ex5} and \ref{ex6}, with the weight generating function
  formed from their product, with two different quantum deformation parameters $q$ and $p$.
\bea
G(z) = Q(q, p, z) &\& \deq E(q,z)H(p,z) = \prod_{k=0}^\infty (1+ q^k z) (1- p^k z)^{-1} = \sum_{i=0}^\infty Q_i(q,p)z^i, \\
Q_i(q,p) &\& \deq \sum_{m=0}^i q^{\frac{1}{2}m(m-1)} \left(\prod_{j=1}^m (1-q^j) \prod_{j=1}^{i -m}(1-p^j)\right)^{-1},
\quad Q_\lambda(q,p) =\prod_{i=1}^{\ell(\lambda)} Q_{\lambda_i}(q,p), \cr
&\& \\
Q(q, p, \beta, \JJ) &\&= \ E(q,\beta, \JJ) H(p, \beta, \JJ), \\
r^{Q(q, p), \beta)}_j &\&= \prod_{k=0}^\infty \frac{1+ q^k \beta j}{1- p^k \beta j}, \\
r^{Q(q, p), \beta)}_\lambda&\&
= \prod_{k=0}^\infty \prod_{(i,j)\in \lambda} \ \frac{1+ q^k \beta (j-i)}{1-p^k \beta (j-i)}
= \prod_{k=0}^\infty (-q/p)^{k\abs{\lambda}} \frac{(1/(\beta q^k))_\lambda}{(-1/(\beta p^k))_\lambda}, \\
T^{Q(q, p)}_j(\beta) &\& = \sum_{i=1}^j\Li_2(p, \beta i) - \sum_{i=1}^j\Li_2(q, -\beta i), \cr
T^{Q(q, p)}_{-j} (\beta)&\& = -\sum_{i=1}^{j-1}\Li_2(p, - \beta i) +\sum_{i=1}^{j-1}\Li_2(q, \beta i), \quad j >0.
\label{rQqp}
\eea

The coefficient $ F_{Q(q,p)}^d(\mu, \nu)$ is
\be
F_{Q(q, p)}^d(\mu, \nu) = \sum_{\lambda, \ \abs{\lambda}=d} Q_\lambda (q,p)\, m^\lambda_{\mu \nu}
= (d!)^{-1} \sum_{\lambda, \ \abs{\lambda}=d} \tilde{Q}_\lambda (q,p)\, \tilde{m}^\lambda_{\mu \nu}.
\label{F_Qqpd}
\ee

Geometrically, these are interpreted in  \autoref{hurwitz_Qqp}. as the composite of two types of weighted enumerations; i.e., two species of branch points, one of which is counted with the weight corresponding to a Bosonic gas as in \autoref{ex5}, the other counted, as in \autoref{ex6}, with signs determined by the parity of the number of such branch points. In the classical limit $q\ra 1, p \ra 1$, we again recover \autoref{ex1}.

\eex

\subsection{Generating functions for Hurwitz numbers: classical counting of branched covers}

For \autoref{ex1}, the generating $\tau$-function is 
\bea
\tau^{(\exp, \beta)} ({\bf t}, {\bf s}) &\& = \sum_{\lambda}
e^{\frac{\beta}{2} \sum_{i=1}^{\ell(\lambda)}\lambda_i (\lambda_i - 2i +1)}
s_\lambda({\bf t}) s_\mu({\bf s}) \cr
&\& = \sum_{d=0}^\infty \ \sum_{\mu, \nu, \, \abs{\mu}=\abs{\nu}} H_{\exp}^d(\mu, \nu) \beta^d p_\mu ({\bf t}) p_\nu({\bf s}),
\eea
where
\be
 H^d_{\exp} (\mu, \nu)
= \frac{1}{d!} H(\mu^{(1)} = (2, 1^{n-2}), \dots, \mu^{(d)} =  (2, 1^{n-2}), \mu, \nu)
\ee
is $\frac{1}{d!}$ times Okounkov's double Hurwitz number ${\rm Cov}_d(\mu, \nu)$ \cite{Ok}; that is, the number of $n=\abs{\mu}=\abs{\nu}$ sheeted branched covers with branch points of ramification type $\mu$ and $\nu$ at the points $0$ and $\infty$,
and $d$ further simple branch points.

For \autoref{ex2}, the generating $\tau$-function is \cite{GH, HO2}
\bea
\tau^{(E, \beta} ({\bf t}, {\bf s}) &\& = \sum_{\lambda}
z^{\abs{\lambda}} (\beta^{-1})_\lambda s_\lambda({\bf t}) s_\mu({\bf s}) \cr
&\& = \sum_{d=0}^\infty \beta^d \sum_{\mu, \nu,\; \abs{\mu}=\abs{\nu}} H_E^d(\mu, \nu) p_\mu ({\bf t}) p_\nu({\bf s}),
\eea
where
\be
H_E^d(\mu, \nu) = \sum_{\mu^{(1)}, \ \ell^*(\mu_1) =d}
H(\mu^{(1)}, \mu, \nu)
\ee
is now interpreted as the number of $n=\abs{\mu}=\abs{\nu}=\abs{\mu^{(1)}}$ sheeted branched covers with branch points of ramification type $\mu$ and $\nu$ at $0$ and $\infty$,
and one further branch point, with colength $\ell^*(\mu^{(1)}) =d$; i.e., the case of Belyi curves \cite{Z, KZ, HO2, AC1} or {\em dessins d'enfants}.

For \autoref{ex3}, the generating $\tau$-function is \cite{HO2} is
\bea
\tau^{(E^k , \beta)} ({\bf t}, {\bf s}) &\& = \sum_{\lambda}
\beta^{\abs{\lambda}} (1/\beta)_\lambda s_\lambda({\bf t}) s_\mu({\bf s}) \cr
&\& = \sum_{d=0}^\infty \beta^d \sum_{\mu, \nu, \ \abs{\mu}=\abs{\nu} =n} H_{E^k}^d(\mu, \nu) p_\mu ({\bf t}) p_\nu({\bf s}),
\eea
where
\be
H_{E^k}^d(\mu, \nu) = \sum_{\substack{\mu^{(1)}, \dots, \mu^{(k)} \\ \sum_{i=1}^k\ell^*(\mu_i) =d}}
H(\mu^{(1)}, \dots \mu^{(k)}, \mu, \nu)
\ee
is now interpreted \cite{HO2} as the number of $n=\abs{\mu}=\abs{\nu}=\abs{\mu^{(i)}}$ sheeted branched covers with branch points of ramification type $\mu$ and $\nu$ at $0$ and $\infty$,
and (at most) $k$ further branch points, the sums of whose colengths is $d$.

For \autoref{ex4}, the generating $\tau$-function is \cite{GH, HO2}
\bea
\tau^{(H, \beta)} ({\bf t}, {\bf s}) &\& = \sum_{\lambda}
(-\beta)^{\abs{\lambda}} \left(-\beta^{-1}\right)_{\lambda}
s_\lambda({\bf t}) s_\mu({\bf s}) \cr
&\& = \sum_{d=0}^\infty \beta^d\sum_{\mu, \nu,\; \abs{\mu}=\abs{\nu}} H^d_{H}(\mu, \nu) p_\mu ({\bf t}) p_\nu({\bf s})
\eea
where
\be
H^d_{H}(\mu, \nu) = (-1)^{n+d}\sum_{j=1}^\infty (-1)^j \sum_{\substack{\mu^{(1)},\dots,\mu^{(j)} \\ \sum_{i=1}^j\ell^*(\mu_i) = d}}
H(\mu^{(1)}, \dots \mu^{(j)}, \mu, \nu)
\ee
is now interpreted as the signed counting of $n =\abs{\mu}=\abs{\nu}$ sheeted branched covers with branch points of ramification type $\mu$ and $\nu$ at $0$ and $\infty$,
and any number further branch points, the sum of whose colengths is $d $,
with sign determined by the parity of the number of branch points \cite{HO2}.

For each of the four cases $G= \exp, E, E^k$ and $H$, we have thus shown the equality
\be
F^d_G(\mu, \nu) = H^d_G(\mu, \nu)
\ee
between the combinatorial weighted path enumeration and the weighted (signed) branched covering enumeration.

\subsection{The $\tau$-functions $\tau^{(E(q), \beta)}$, $\tau^{(H(q), \beta)}$ and $\tau^{(Q(q,p), \beta)}$ as generating functions for enumeration of quantum paths}

The particular cases
\bea
\tau^{(E(q), \beta)} ({\bf t}, {\bf s}) &\&\deq \sum_{\lambda} r^{(E(q), \beta)}_\lambda s_\lambda({\bf t}) s_\lambda({\bf s})
\label{tau_Eq} \\
&\&= \sum_{d=0}^\infty \sum_{\substack{\mu, \nu \\ \abs{\mu} = \abs{\nu}}}\beta^d F_{E(q)}^d(\mu, \nu) p_\mu({\bf t}) p_\nu({\bf s}),
\label{F_Eq} \\
\tau^{(H(q), \beta)} ({\bf t}, {\bf s}) &\&\deq \sum_{\lambda} r^{(H(q), \beta)}_\lambda  s_\lambda({\bf t}) s_\lambda({\bf s})
\label{tau_Hq}\\
&\&= \sum_{d=0}^\infty \sum_{\substack{\mu, \nu \\ \abs{\mu} = \abs{\nu}}}z^d F_{H(q)}^d(\mu, \nu) p_\mu({\bf t}) p_\nu({\bf s}),
\label{F_Hq} \\
\tau^{(Q(q, p), \beta)} ({\bf t}, {\bf s}) &\&\deq \sum_{\lambda} r^{(Q(q, p), \beta)}_\lambda s_\lambda({\bf t}) s_\lambda({\bf s})
\label{tau_Qqp} \\
&\&= \sum_{d=0}^\infty \sum_{\substack{\mu, \nu \\ \abs{\mu} = \abs{\nu}}}\beta^d F_{Q(q,p)}^d(\mu, \nu) p_\mu({\bf t}) p_\nu({\bf s}).
\label{F_Qqp}
\eea
may be viewed as special $q$-deformations of the generating functions associated to examples
\autoref{ex1}, \autoref{ex2}, with $G(z) =1+z$, $G(z)=(1-z)^{-1}$ respectively, and the hybrid combination generated by the ratio $\frac{1+z}{1-z}$. The former were considered previously in \cite{GH, HO2}, and given both combinatorial and geometric interpretations in terms of weakly or strictly monotonic paths in the Cayley graph.

\br
Note that, for the special values of the flow parameters $({\bf t}, {\bf s})$ given by trace invariants of a pair of commuting $M\times M$ matrices, $X$ and $Y$,
\be
t_i =\frac{1}{i} \tr(X^i), \quad s_i =\frac{1}{i} \tr(Y^i),
\ee
with eigenvalues $(x_1, \dots, x_M)$, $(y_1, \dots, y_M)$,
these may be viewed as special cases of the two types of basic hypergeometric functions of matrix arguments \cite{GR, OrS}.
\er

\subsection{Classical limits of examples \texorpdfstring{$E(q)$, $H(q)$ and $Q(q,p)$}{E(q), H(q) and Q(q,p)}}
Setting $q= e^\epsilon$ for some small parameter, and taking the leading term contribution in the limit $\epsilon \ra 0$, we obtain
\be
\lim_{\epsilon \ra 0} E(q, \epsilon z) = e^z
\ee
and therefore, taking the scaled limit with $ z \ra \epsilon z$, we obtain
\be
\lim_{\epsilon \ra 0} \tau^{(E(q), \epsilon \beta)} ({\bf t}, {\bf s}) = \tau^{(\exp, \beta)} ({\bf t}, {\bf s})
\ee
Similarly, we have
\be
\lim_{\epsilon \ra 0} H(q, \epsilon z) = e^z
\ee
and hence
\be
\lim_{\epsilon \ra 0} \tau^{(H(q), \epsilon \beta)} ({\bf t}, {\bf s}) = \tau^{(\exp, \beta)} ({\bf t}, {\bf s}).
\ee
And finally, for the double quantum Hurwitz case, \autoref{ex7}, setting
\be
q= e^\epsilon, \quad p = e^{\epsilon'}
\ee
and replacing $z$ by $z(\frac{1}{\epsilon} + \frac{1}{\epsilon'})$, we get
\be
\lim_{\epsilon, \epsilon' \ra 0} Q(q, p, \frac{z\epsilon \epsilon'}{\epsilon + \epsilon'}) = e^z
\ee
and hence
\be
\lim_{\epsilon, \epsilon' \ra 0} \tau^{(Q(q,p), \frac{\beta\epsilon \epsilon'}{\epsilon + \epsilon'})} ({\bf t}, {\bf s})
= \tau^{(\exp, \beta)} ({\bf t}, {\bf s}).
\ee
Thus, we recover Okounkov's classical double Hurwitz number generating function $\tau^{\exp(z)} ({\bf t}, {\bf s})$ as the classical limit in each case.

\section{Quantum Hurwitz numbers}
\label{quantum_hurwitz}

We proceed to the interpretation of the quantities $H_{E(q)}^d(\mu, \nu)$,
$H_{H(q)}^d(\mu, \nu)$ and $H_{Q(q,p)}^d(\mu, \nu)$ as quantum weighted
enumerations of branched coverings of the Riemann sphere. 

\subsection{Symmetrized monomial sums and \texorpdfstring{$q$}{q}-weighted Hurwitz sums}

We begin by recalling three symmetrized monomial summation formulae that will be needed in what follows.
These are easily proved (e.g., by recursive diagonal summation of the geometric series involved).

Let $\overline{\Cb[x_1, \dots, x_k]}$ be the completion of the field extension of
$\Cb$ by $k$ indeterminates, viewed as a normed vector space with norm $\abs{\cdots}$ and $0 < \abs{x_i} < 1$,
so that the corresponding geometric series converge
\be
\sum_{m=0}^\infty x_i^m = \frac{1}{1- x_i}, \quad i=1, \dots, k.
\ee
Then
\bea
\sum_{\sigma\in S_k} \sum_{0 \le i_1 < \cdots < i_k}^\infty x_{\sigma(1)}^{i_1} \cdots x_{\sigma(k)}^{i_k}
&\&= \sum_{\sigma\in S_k} \frac{x^{k-1}_{\sigma(1)} x^{k-2}_{\sigma(2)}\cdots x_{\sigma(k-1)}}{
(1- x_{\sigma(1)}) (1- x_{\sigma(1)} x_{\sigma(2)}) \cdots (1- x_{\sigma(1)} \cdots x_{\sigma(k)})}
\label{strictly_ordered_monomial sums_0} \\
\sum_{\sigma\in S_k} \sum_{1 \le i_1 < \cdots < i_k}^\infty x_{\sigma(1)}^{i_1} \cdots x_{\sigma(k)}^{i_k}
&\&= \sum_{\sigma\in S_k} \frac{x^{k}_{\sigma(1)} x^{k-1}_{\sigma(2)} \cdots x_{\sigma(k)}}{
(1- x_{\sigma(1)}) (1- x_{\sigma(1)} x_{\sigma(2)}) \cdots (1- x_{\sigma(1)} \cdots x_{\sigma(k)})}
\label{strictly_ordered_monomial sums_1} \\
\sum_{\sigma\in S_k} \sum_{0 \le i_1 \le \cdots \le i_k}^\infty x_{\sigma(1)}^{i_1} \cdots x_{\sigma(k)}^{i_k}
&\&= \sum_{\sigma\in S_k} \frac{1}{(1- x_{\sigma(1)}) (1- x_{\sigma(1)} x_{\sigma(2)})
\cdots (1- x_{\sigma(1)} \cdots x_{\sigma(k)})}
\label{weakly_ordered_monomial sums_0}
\eea

In what follows, we let $(\mu^{(1)}, \dots, \mu^{(k)})$ denote a set of partitions of weight $\abs{\mu^{(i)}}=n$, and choose the $x_i$'s to be
\be
x_i \deq q^{\ell^*(\mu^{(i)})}.
\ee
For the generating functions $E(q), E'(q)$ and $H(q) $, using eqs.~\eqref{strictly_ordered_monomial sums_0}, 
\eqref{strictly_ordered_monomial sums_1} and \eqref{weakly_ordered_monomial sums_0} we have the following weighting factors
\bea
W_{E(q)} (\mu^{(1)}, \dots, \mu^{(k)}) &\& \deq {1\over |\aut(\lambda)|}
\sum_{\sigma\in S_k} \sum_{0 \le i_1 < \cdots < i_k}^\infty q^{i_1 \ell^*(\mu^{(\sigma(1))})} \cdots q^{i_k \ell^*(\mu^{(\sigma(k))})} \cr
&\& = {1\over  |\aut(\lambda)|}\sum_{\sigma\in S_k} \frac{q^{(k-1) \ell^*(\mu^{(\sigma(1))})} \cdots q^{\ell^*(\mu^{(\sigma(k-1))})}}{
(1- q^{\ell^*(\mu^{(\sigma(1))})}) \cdots (1- q^{\ell^*(\mu^{(\sigma(1))})} \cdots q^{\ell^*(\mu^{(\sigma(k))})})}, \cr
&\&
\label{W_E_q}
\eea
\bea
W_{E'(q)} (\mu^{(1)}, \dots, \mu^{(k)}) &\& \deq
 {1\over |\aut(\lambda)|} \sum_{\sigma\in S_k} \sum_{1 \le i_1 < \cdots < i_k}^\infty q^{i_1 \ell^*(\mu^{(\sigma(1))})} \cdots q^{i_k \ell^*(\mu^{(\sigma(k))})} \cr
&\&=  {1\over  |\aut(\lambda)|} \sum_{\sigma\in S_k} \frac{q^{k \ell^*(\mu^{(\sigma(1))})} \cdots q^{\ell^*(\mu^{(\sigma(k))})}}{
(1- q^{\ell^*(\mu^{(\sigma(1))})}) \cdots (1- q^{\ell^*(\mu^{(\sigma(1))}} \cdots q^{\ell^*(\mu^{(\sigma(k))})})} \cr
&\& =  {1\over  |\aut(\lambda)|} \sum_{\sigma\in S_k} \frac{1}{
(q^{-\ell^*(\mu^{(\sigma(1))})} -1) \cdots (q^{-\ell^*(\mu^{(\sigma(1))})} \cdots q^{-\ell^*(\mu^{(\sigma(k))})}-1)}, \cr
&\&
\label{W_Eprime_q}
\eea
\bea
W_{H(q)} (\mu^{(1)}, \dots, \mu^{(k)}) &\& \deq
{(-1)^{\ell^*(\lambda)}\over   |\aut(\lambda)|}\sum_{\sigma\in S_k} \sum_{0 \le i_1 \le \cdots \le i_k}^\infty q^{i_1 \ell^*(\mu^{(\sigma(1))})} \cdots q^{i_k \ell^*(\mu^{(\sigma(k))})} \cr
&\&= {(-1)^{\ell^*(\lambda)}  \over  |\aut(\lambda)|}\sum_{\sigma\in S_k} \frac{1}{
(1- q^{\ell^*(\mu^{(\sigma(1))})}) \cdots (1- q^{\ell^*(\mu^{(\sigma(1))})} \cdots q^{\ell^*(\mu^{(\sigma(k))})})}, \cr
&\&
\label{W_H_q}
\eea
where $\lambda$ is the partition with parts $(\ell^*(\mu^{(1)}), \dots, \ell^*(\mu^{(k)}))$.


\subsection{Quantum Hurwitz numbers: the case \texorpdfstring{$E(q)$}{E(q)}}
\label{E(q)}

Substituting eq.~\eqref{W_E_q} into \eqref{tau_Eq}, \eqref{tau_G_tilde_H} and \eqref{Hd_G} gives
\begin{theorem}
\label{hurwitz_Eq}
\be
\tau^{(E(q), \beta)} ({\bf t}, {\bf s}) =
\sum_{d=0}^\infty \beta^d \sum_{\substack{\mu, \nu \\ \abs{\mu}=\abs{\nu}}}
H^d_{E(q)}(\mu, \nu) p_\mu({\bf t}) p_\nu({\bf s}),
\ee
where
\be
H^d_{E(q)}(\mu, \nu) \deq \sum_{k=0}^\infty \sideset{}{'}\sum_{\substack{\mu^{(1)}, \dots \mu^{(k)} \\ \sum_{i=1}^k \ell^*(\mu^{(i)})= d}}
W_{E(q)}(\mu^{(1)}, \dots, \mu^{(k)}) H(\mu^{(1)}, \dots, \mu^{(k)}, \mu, \nu)
\label{Hd_Eq}
\ee
are the weighted (quantum) Hurwitz numbers that count the number of branched coverings with genus $g$ given by \eqref{riemann_hurwitz_bis}
with weight $W_{E(q)}(\mu^{(1)}, \dots, \mu^{(k)})$ for every branched covering of type $ (\mu^{(1)}, \dots, \mu^{(k)}, \mu, \nu)$.
\end{theorem}
From eq.~\eqref{F_Eq} follows:
\begin{corollary}
\label{F_equals_H_Eq}
The weighted (quantum) Hurwitz number for the branched coverings of the Riemann sphere with genus given by
\eqref{riemann_hurwitz_bis} is equal to the combinatorial Hurwitz number given by formula \eqref{F_Eqd}
enumerating weighted paths in the Cayley graph:
\be
H^d_{E(q)}(\mu, \nu) = F^d_{E(q)}(\mu, \nu).
\ee
\end{corollary}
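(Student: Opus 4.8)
The plan is to read the corollary as an immediate consequence of comparing two expansions of one and the same $\tau$-function. The function $\tau^{E(q,z)}({\bf t},{\bf s})$ of \eqref{tau_Eq} has already been written in the basis of products of power sums in two ways: once on the covering side, with coefficients $H^d_{E(q)}(\mu,\nu)$ by \autoref{hurwitz_Eq}, and once on the path side, with coefficients $F^d_{E(q)}(\mu,\nu)$ via the general \autoref{generating_function_weighted_paths}. Since $E(q,z)=\prod_{k=0}^\infty(1+q^kz)$ is a weight generating function of the infinite-product form \eqref{G_inf_prod} — take $c_k=q^{k-1}$ for $k\ge 1$, the product converging because $0<\abs{q}<1$ — the corollary is in fact just the specialization $G=E(q)$ of \autoref{H_equals_F}; the argument below simply makes this explicit in the form suggested by the placement of the corollary right after \autoref{hurwitz_Eq}.

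First I would invoke \autoref{generating_function_weighted_paths} at $G=E(q)$ to obtain the expansion \eqref{F_Eq}, namely $\tau^{E(q,z)}({\bf t},{\bf s})=\sum_{d\ge 0}z^d\sum_{\abs{\mu}=\abs{\nu}}F^d_{E(q)}(\mu,\nu)\,p_\mu({\bf t})p_\nu({\bf s})$, with $F^d_{E(q)}(\mu,\nu)$ the weighted path enumeration \eqref{F_Eqd}. Next I would use \autoref{hurwitz_Eq}, whose derivation substitutes the closed-form weights \eqref{W_E_q} into the content-product and Frobenius-character computation of \autoref{generating_function_weighted_coverings}, to get the second expansion $\tau^{E(q,z)}({\bf t},{\bf s})=\sum_{d\ge 0}z^d\sum_{\abs{\mu}=\abs{\nu}}H^d_{E(q)}(\mu,\nu)\,p_\mu({\bf t})p_\nu({\bf s})$ with $H^d_{E(q)}(\mu,\nu)$ as in \eqref{Hd_Eq}. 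Finally, the family $\{p_\mu({\bf t})p_\nu({\bf s})\}$ is linearly independent, so after collecting the power of $z$ one equates the coefficient of $z^d\,p_\mu({\bf t})p_\nu({\bf s})$ in the two expansions, which gives $H^d_{E(q)}(\mu,\nu)=F^d_{E(q)}(\mu,\nu)$, as claimed.

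The step I expect to be the main obstacle is not the coefficient comparison, which is purely formal, but the verification that the manipulations feeding \autoref{hurwitz_Eq} are valid for this choice of $G$: one needs $0<\abs{q}<1$ so that the geometric series behind the symmetrized monomial sums converge to the closed forms appearing in \eqref{W_E_q}, and one needs the content products $r^{E(q)}_\lambda(z)$ together with the double Schur series \eqref{tau_Eq} to converge so that the interchange of the $k$-sum with the sums over the $\mu^{(i)}$ in the proof of \autoref{generating_function_weighted_coverings} remains legitimate after the reindexing $c_k=q^{k-1}$. Once these convergence facts are in hand — and they are precisely what is presupposed by the statement of \autoref{hurwitz_Eq} — the corollary follows with no further computation.
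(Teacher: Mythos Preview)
Your argument is correct and matches the paper's own proof: the corollary is stated immediately after \autoref{hurwitz_Eq} with the remark that it follows from \eqref{F_Eq}, i.e., by equating the $H^d_{E(q)}$ and $F^d_{E(q)}$ coefficients of $z^d p_\mu({\bf t})p_\nu({\bf s})$ in the two expansions of $\tau^{E(q,z)}({\bf t},{\bf s})$. Your additional remarks on convergence and on this being the $G=E(q)$ specialization of \autoref{H_equals_F} are accurate but go beyond what the paper spells out.
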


\subsection{Dual quantum Hurwitz numbers: the case \texorpdfstring{$H(q)$}{H(q)}}
\label{H(q)}

We proceed similarly for this case. Substituting eq.~\eqref{W_H_q} into \eqref{tau_Hq}, \eqref{tau_G_tilde_H} and \eqref{Hd_G} gives
\begin{theorem}
\label{hurwitz_Hq}
\be
\tau^{(H(q), \beta)} ({\bf t}, {\bf s}) =
\sum_{d=0}^\infty \beta^d \sum_{\substack{\mu, \nu \\ \abs{\mu}= \abs{\nu}}}
H^d_{H(q)}(\mu, \nu) p_\mu({\bf t}) p_\nu({\bf s}),
\ee
where
\be
H^d_{H(q)}(\mu, \nu) \deq \sum_{k=0}^\infty \sideset{}{'}\sum_{\substack{\mu^{(1)}, \dots \mu^{(k)} \\ \sum_{i=1}^d \ell^*(\mu^{(i)})= d}}
W_{H(q)}(\mu^{(1)}, \dots, \mu^{(k)}) H(\mu^{(1)}, \dots, \mu^{(k)}, \mu, \nu)
\label{Hd_Hq}
\ee
are the weighted, signed (quantum) Hurwitz numbers that count the number of branched coverings with genus $g$ given by \eqref{riemann_hurwitz_bis} and sum of colengths $k$, with weight $W_{H(q)}(\mu^{(1)}, \dots, \mu^{(k)})$ for every branched covering of type $ (\mu^{(1)}, \dots, \mu^{(k)}, \mu, \nu)$.
\end{theorem}

From eq.~\eqref{F_Hq} follows:
\begin{corollary}
\label{F_equals_H_Hq}
The weighted (quantum) Hurwitz number for the branched coverings of the Riemann sphere with genus given by
\eqref{riemann_hurwitz_bis} is again equal to the combinatorial Hurwitz number given by formula \eqref{F_Hqd}
enumerating weighted paths in the Cayley graph:
\be
H^d_{H(q)}(\mu, \nu) = F^d_{H(q)}(\mu, \nu).
\ee
\end{corollary}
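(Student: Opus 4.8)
The plan is to read Corollary \ref{F_equals_H_Hq} off from the fact that the single $\tau$-function $\tau^{H(q,z)}({\bf t},{\bf s})$ has already been expanded in the basis $\{p_\mu({\bf t})\,p_\nu({\bf s})\}$ in two independent ways. On one hand, Theorem \ref{hurwitz_Hq} gives the geometric expansion $\tau^{H(q,z)} = \sum_{d\ge 0} z^d \sum_{|\mu|=|\nu|} H^d_{H(q)}(\mu,\nu)\, p_\mu({\bf t})\,p_\nu({\bf s})$, with coefficients the weighted (signed) branched-cover counts \eqref{Hd_Hq}. On the other hand, equation \eqref{F_Hq} --- obtained by specializing Theorem \ref{generating_function_weighted_paths} to the weight generating function $G = H(q,z)$ and then inserting the Frobenius character formula \eqref{frobenius_character} into each Schur factor of \eqref{tau_Hq} --- gives the combinatorial expansion $\tau^{H(q,z)} = \sum_{d\ge 0} z^d \sum_{|\mu|=|\nu|} F^d_{H(q)}(\mu,\nu)\, p_\mu({\bf t})\,p_\nu({\bf s})$, with $F^d_{H(q)}$ the weighted path count \eqref{F_Hqd}.

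First I would record that the products $\{p_\mu({\bf t})\,p_\nu({\bf s})\}$, indexed by ordered pairs of partitions, are linearly independent over $\Cbb$ in the ring of formal power series in the two sets of flow variables, since the power sums in each set of variables are algebraically independent. Then I would equate the two expansions of $\tau^{H(q,z)}$ above and extract, for every $d\ge 0$ and every pair $(\mu,\nu)$ with $|\mu|=|\nu|$, the coefficient of $z^d\,p_\mu({\bf t})\,p_\nu({\bf s})$; this gives $H^d_{H(q)}(\mu,\nu) = F^d_{H(q)}(\mu,\nu)$, which is the assertion.

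Structurally this is merely the specialization of the general Corollary \ref{H_equals_F} to the case at hand: $H(q,z) = \prod_{k=0}^\infty (1-q^k z)^{-1}$ is a weight generating function of dual type, namely $\tilde G(z)$ with $G(z) = E(q,z) = \prod_{k=0}^\infty (1+q^k z)$ and parameter sequence ${\bf c} = (q^k)_{k\ge 0}$. The second equality of Corollary \ref{H_equals_F}, $H^d_{\tilde G}(\mu,\nu) = F^d_{\tilde G}(\mu,\nu)$, then reads precisely $H^d_{H(q)}(\mu,\nu) = F^d_{H(q)}(\mu,\nu)$, so no new idea is needed beyond the Cauchy--Littlewood argument of \autoref{CL_approach} (or, equivalently, the character-orthogonality argument of \autoref{weighted_combinatorial_expansions}).

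The only delicate point --- and the nearest thing to an obstacle --- is convergence. Rewriting the weights $W_{H(q)}(\mu^{(1)},\dots,\mu^{(k)})$ by means of the weakly-ordered monomial summation formula \eqref{weakly_ordered_monomial sums_0}, and making sense of the infinite products defining the content products $r^{H(q)}_\lambda(z)$ and the parameters $T^{H(q)}_j$, both require $0 < |q| < 1$ so that the relevant geometric series converge absolutely. Under this hypothesis every manipulation above is justified on a suitable domain of the expansion parameters, and the coefficient comparison is legitimate; I would state this hypothesis at the outset, after which the remainder is bookkeeping already carried out in \autoref{weighted_combinatorial_expansions}, \autoref{CL_approach} and the preceding subsections.
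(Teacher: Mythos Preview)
Your argument is correct and is exactly the approach taken in the paper: the corollary is obtained by comparing the two power-sum expansions of $\tau^{H(q,z)}$, namely the geometric one from Theorem~\ref{hurwitz_Hq} and the combinatorial one in \eqref{F_Hq}, and invoking the linear independence of the $p_\mu({\bf t})\,p_\nu({\bf s})$. Your remarks on this being a specialization of Corollary~\ref{H_equals_F} and on the convergence hypothesis $|q|<1$ are accurate elaborations, but the paper itself records the result with nothing more than the phrase ``From eq.~\eqref{F_Hq} follows.''
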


\subsection {Double quantum Hurwitz numbers: the case \texorpdfstring{$Q(q,p)$}{Q(q,p)}}

This case can be understood by combining the results for the previous two multiplicatively.
Since
\be
r^{(Q(q,p),\beta)}_\lambda = r^{(E(q), \beta)}_\lambda\, r^{(H(p), \beta)}_\lambda,
\ee
it follows that:
\begin{theorem}
\label{hurwitz_Qqp}
\be
\tau^{(Q(q, p), \beta)} ({\bf t}, {\bf s}) = \sum_{d=0}^\infty \beta^d \sum_{\substack{\mu, \nu \\ \abs{\mu}=\abs{\nu}}}
H^d_{Q(q,p)}(\mu, \nu) p_\mu({\bf t}) p_\nu({\bf s}),
\ee
where
\bea
H^d_{Q(q,p)}(\mu, \nu) \deq &\& \sum_{k=0}^\infty \sum_{m=0}^\infty (-1)^{m}
\quad \mathclap{\sideset{}{'}\sum_{\substack{\mu^{(1)}, \dots \mu^{(k)}, \nu^{(1)}, \dots \nu^{(m)} \\ \sum_{i=1}^k \ell^*(\mu^{(i)}) +\sum_{i=1}^m \ell^*(\nu^{(i)})= d}}}
\qquad \qquad W_{E(q)}(\mu^{(1)}, \dots, \mu^{(k)}) \cr
&\&
\times W_{H(p)}(\nu^{(1)}, \dots, \nu^{(m)})H(\mu^{(1)}, \dots, \mu^{(k)}, \nu^{(1)}, \dots, \nu^{(m)}, \mu, \nu)
\label{Hd_Qqp}
\eea
are the weighted (quantum) Hurwitz numbers that count the number of branched coverings with genus $g$ given by \eqref{riemann_hurwitz_bis} and sum of colengths $d$, with two mutually independent species of branch points, the first $(\mu^{(1)}, \dots, \mu^{(k)})$ having weight
$W_{E(q)}(\mu^{(1)}, \dots, \mu^{(k)})$, the second $(\nu^{(1)}, \dots, \nu^{(m)})$, signed weight
$ (-1)^{m}W_{H(q)}(\nu^{(1)}, \dots, \nu^{(m)})$ for every branched covering of type $ (\mu^{(1)}, \dots, \mu^{(k)}, \nu^{(1)}, \dots, \nu^{(m)}, \mu, \nu)$.\end{theorem}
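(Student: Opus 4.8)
The plan is to exploit the multiplicativity of the content product under multiplication of weight generating functions, and then to reuse, essentially verbatim, the two one-factor derivations already carried out for $\tau^{E(q,z)}$ in \autoref{hurwitz_Eq} and for $\tau^{H(q,z)}$ (with $q$ replaced by $p$) in \autoref{hurwitz_Hq}. The starting observation, already recorded above the statement, is that $Q(q,p,z) = E(q,z)\,H(p,z)$ together with the product form \eqref{content_product_G} of the content product gives $r^{Q(q,p)}_\lambda(z) = r^{E(q)}_\lambda(z)\,r^{H(p)}_\lambda(z)$ for every partition $\lambda$.

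First I would take the two content-product expansions produced in the course of proving \autoref{generating_function_weighted_coverings}: the one for the infinite product $E(q,z)=\prod_{k\ge0}(1+q^kz)$, and the one for the dual product $H(p,z)=\prod_{k\ge0}(1-p^kz)^{-1}$. Because these run over disjoint families of parameters ($\{q^k\}$ on one side, $\{p^k\}$ on the other), multiplying the two expansions produces no cross terms: the result is a double sum over integers $k,m\ge0$ and over ordered collections of non-identity cycle types $(\mu^{(1)},\dots,\mu^{(k)})$ and $(\nu^{(1)},\dots,\nu^{(m)})$, whose summand is the product of $W_{E(q)}(\mu^{(1)},\dots,\mu^{(k)})$, the signed weight $(-1)^m W_{H(p)}(\nu^{(1)},\dots,\nu^{(m)})$ coming from the dual factor, the character factor $\prod_{j=1}^k\frac{h(\lambda)\chi_\lambda(\mu^{(j)})}{z_{\mu^{(j)}}}\prod_{i=1}^m\frac{h(\lambda)\chi_\lambda(\nu^{(i)})}{z_{\nu^{(i)}}}$, and a power of $z$ equal to the total colength $\sum_j\ell^*(\mu^{(j)})+\sum_i\ell^*(\nu^{(i)})$.

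Next I would substitute this expansion into the double Schur series \eqref{tau_Qqp}, apply the Frobenius character formula \eqref{frobenius_character} to each of $s_\lambda({\bf t})$ and $s_\lambda({\bf s})$ so as to produce the extra factors $z_\mu^{-1}\chi_\lambda(\mu)p_\mu({\bf t})$ and $z_\nu^{-1}\chi_\lambda(\nu)p_\nu({\bf s})$, and then perform the sum over $\lambda$ with the Frobenius-Schur formula \eqref{Frob_Schur} applied to the $(k+m+2)$-tuple of profiles $(\mu^{(1)},\dots,\mu^{(k)},\nu^{(1)},\dots,\nu^{(m)},\mu,\nu)$. The powers of $h(\lambda)$ match, since the content-product expansion supplies $h(\lambda)^{k+m}$ while the two Schur factors supply none and $k+m=(k+m+2)-2$. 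Collecting the result by the total colength $d$ gives precisely \eqref{Hd_Qqp}; the assertion about the genus is then the Riemann-Hurwitz formula \eqref{riemann_hurwitz_bis} for a cover with those $k+m+2$ branch points, whose extra colength sum equals $d$.

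The step I expect to require the most care is the sign bookkeeping on the $H(p)$-species: one must check that expanding the reciprocal product $H(p,z)$ as in the dual case of \autoref{generating_function_weighted_coverings} contributes exactly the factor $(-1)^m$ once the signs $(-1)^{\ell^*(\lambda)}$ already built into $W_{H(p)}$ and the powers of $(-z)$ versus $z$ in the colength exponents are all reconciled, whereas the $E(q)$-species contributes no sign. Everything else is a routine transcription of the two one-factor arguments. As an independent cross-check, the same identity can be obtained on the combinatorial side from the factorization $Q(q,p,z,\JJ)=E(q,z,\JJ)\,H(p,z,\JJ)$ of central elements of $\Cb[S_n]$ together with \autoref{weighted_path_sums}, and then transported to the geometric side via \autoref{H_equals_F}.
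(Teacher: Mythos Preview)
Your proposal is correct and follows the same approach as the paper, which is extremely terse: the paper simply notes the factorization $r^{Q(q,p,z)}_\lambda = r^{E(q,z)}_\lambda\,r^{H(p,z)}_\lambda$ and declares that the theorem follows by combining the two previous cases multiplicatively. You have spelled out exactly how that combination works, including the use of the Frobenius--Schur formula on the $(k+m+2)$-tuple and the sign bookkeeping on the $H(p)$ species, which the paper leaves entirely to the reader.
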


It follows from eq.~\eqref{F_Eq} that:
\begin{corollary}
\label{F_equals_H_Qqp}
The weighted (quantum) Hurwitz number for the branched coverings of the Riemann sphere with genus given by
\eqref{riemann_hurwitz_bis} is again equal to the combinatorial Hurwitz number given by formula \eqref{F_Qqpd}
enumerating weighted paths in the Cayley graph:
\be
H^d_{Q(q,p)}(\mu, \nu) = F^d_{Q(q,p)}(\mu, \nu).
\ee
\end{corollary}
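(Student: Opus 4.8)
The plan is to identify both $H^d_{Q(q,p)}(\mu,\nu)$ and $F^d_{Q(q,p)}(\mu,\nu)$ as the coefficient of $z^d\,p_\mu({\bf t})\,p_\nu({\bf s})$ in a single function, namely the $\tau$-function $\tau^{Q(q,p,z)}({\bf t},{\bf s})$, expanded in two different ways, and then to invoke the linear independence of the tensor products $p_\mu({\bf t})\otimes p_\nu({\bf s})$ to conclude that the two coefficients agree.

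First I would observe that $Q(q,p,z)=\sum_{i\ge 0}Q_i(q,p)\,z^i$ with $Q_0(q,p)=1$ is a genuine weight generating function in the sense of \eqref{generating_G}. Hence \autoref{generating_function_weighted_paths} applies verbatim with $G(z)=Q(q,p,z)$ and gives the expansion \eqref{F_Qqp}, whose coefficients, upon specializing the definitions \eqref{Fd_G_def} and \eqref{Fd_G_mu_nu} to this $G$, are precisely the combinatorial quantum Hurwitz numbers $F^d_{Q(q,p)}(\mu,\nu)$ of \eqref{F_Qqpd}. On the geometric side, \autoref{hurwitz_Qqp} --- which is itself deduced from the multiplicativity $r^{Q(q,p,z)}_\lambda=r^{E(q,z)}_\lambda\,r^{H(p,z)}_\lambda$ of the content products combined with the three symmetrized monomial-sum identities recalled at the beginning of \autoref{quantum_hurwitz} --- provides the other expansion, with coefficients $H^d_{Q(q,p)}(\mu,\nu)$ given by \eqref{Hd_Qqp}.

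The concluding step is formal: since the power sums $\{p_\mu\}$ form a basis of the algebra $\Lambda$ of symmetric functions over $\Cbb$, and hence over the ring of rational functions in $q$ and $p$ (or its relevant completion $\overline{\Cb[x_1,\dots,x_k]}$) in which all the coefficients live, the family $\{z^d\,p_\mu({\bf t})\,p_\nu({\bf s})\}$ is linearly independent; matching the coefficient of $z^d\,p_\mu({\bf t})\,p_\nu({\bf s})$ in the two displayed expansions of $\tau^{Q(q,p,z)}({\bf t},{\bf s})$ therefore forces $H^d_{Q(q,p)}(\mu,\nu)=F^d_{Q(q,p)}(\mu,\nu)$, which is the assertion.

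The only point that requires any real care lies not in this comparison --- immediate once \autoref{hurwitz_Qqp} is granted --- but in the input to \autoref{hurwitz_Qqp} itself, namely the verification that the monomial-symmetric weight $m_\lambda({\bf c})$ attached through \eqref{W_G_def} to the product generating function $Q(q,p,z)=E(q,z)\,H(p,z)$ factors as the signed Cauchy product $\sum(-1)^m\,W_{E(q)}(\mu^{(1)},\dots,\mu^{(k)})\,W_{H(p)}(\nu^{(1)},\dots,\nu^{(m)})$ over all ways of splitting the $k+m$ additional branch points into the two species; this, together with the remark that for $0<|q|,|p|<1$ the underlying geometric series converge so that the relevant rearrangements are legitimate, is exactly what \autoref{hurwitz_Qqp} already packages, so nothing further is needed and the corollary follows.
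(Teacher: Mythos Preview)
Your proposal is correct and follows exactly the paper's approach: the corollary is deduced by comparing the power-sum expansion \eqref{F_Qqp} (the combinatorial side, an instance of \autoref{generating_function_weighted_paths}) with the expansion furnished by \autoref{hurwitz_Qqp} (the geometric side), and equating coefficients. The paper's own justification is the single phrase ``It follows from eq.~\eqref{F_Eq}'' (almost certainly a typo for \eqref{F_Qqp}), so your write-up is in fact more explicit than the original while remaining on the same track.
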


\subsection{Bose gas model}

A slight modification of \autoref{ex5} consists of replacing the generating function $E(q,z)$ defined in eq.~\eqref{GEq}
by $E'(q,z)$, as defined in \eqref{GE'q} and \eqref{GH'q}. The effect of this is simply to replace the weighting factors
$ \frac{1}{1- q^{\ell^*(\mu)}} $ in eq.~\eqref{Hd_Eq} by $ \frac{1}{q^{-\ell^*(\mu)}-1}$.

If we identify
\be
q \deq e^{-{\hbar \omega_0 \over k_B T}}, 
\ee
where $\omega_0$ is the lowest frequency excitation in a gas of identical Bosonic particles and assume the energy spectrum of the particles consists of integer multiples of $\hbar \omega_0$
\be
\epsilon_k = k \hbar \omega_0,
\ee
the relative probability of occupying the energy level $\epsilon_k$ is
\be
\frac{q^k}{1- q^{k}} = \frac{1}{e^{\beta\epsilon_k - 1}},
\ee
which is the energy distribution of a Bosonic gas with vanishing fugacity. If we assign the energy
\be
\epsilon(\mu) \deq \epsilon_{\ell^*(\mu)} = \hbar \ell^*(\mu)\omega_0
\ee
and assign a weight to a configuration $(\mu^{(1)}, \dots, \mu^{(k)}, \mu, \nu)$ that corresponds to the Bosonic gas weight for a state with total energy that of the additional $k$ branch points

\be
\epsilon(\mu^{(1)}, \dots, \mu^{(k)}) = \sum_{i=1}^k \epsilon(\mu^{(i)})
\ee
we obtain the weight
\be
W(\mu^{(1)}, \dots, \mu^{(k)})= \frac{1}{e^{\beta \epsilon(\mu^{(1)}, \dots, \mu^{(k)})}-1}.
\ee
From eq.~\eqref{W_Eprime_q}, the weighting factor $W_{E'(q)} (\mu^{(1)}, \dots, \mu^{(k)})$ for $k$ branch points with ramification profiles $(\mu^{(1)}, \dots, \mu^{(k)})$ is thus the symmetrized product.
\be
W_{E'(q)} (\mu^{(1)}, \dots, \mu^{(k)}) = {1\over \abs{\aut(\lambda)}}\sum_{\sigma\in S_k}
W(\mu^{(\sigma(1)}) \cdots W(\mu^{(\sigma(1)}, \ldots, \mu^{(\sigma(k)}).
\label{W_bosonic_gas_weight}
\ee
If we associate the branch points to the states of the gas and view the
Hurwitz numbers $H(\mu^{(1)}, \dots \mu^{(k)}, \mu, \nu)$ as random variables, the weighted Hurwitz numbers are given, as in eq.~\eqref{Hd_Eq}, by
\be
H^d_{E'(q)}(\mu, \nu) \deq
\sum_{k=0}^\infty \sum_{\substack{\mu^{(1)}, \dots \mu^{(k)} \\ \sum_{i=1}^k \ell^*(\mu^{(i)})= d}}
W_{E'(q)} (\mu^{(1)}, \dots, \mu^{(k)}) H(\mu^{(1)}, \dots, \mu^{(k)}, \mu, \nu)
\label{Hd_E'q}
\ee
Normalizing by the canonical partition function for fixed total energy $d\hbar\omega$,
\be
Z^d_{E'(q)} \deq \sum_{k =0}^\infty \sum_{\substack{\mu^{(1)}, \dots \mu^{(k)} \\ \sum_{i=1}^k \ell^*(\mu^{(i)})= d}}
W_{E'(q)} (\mu^{(1)}, \dots, \mu^{(k)}),
\label{Zk_E'q}
\ee
we may therefore interpret this as an expectation value of the Hurwitz numbers associated to the Bose gas
\be
\expectation{H^d_{E'(q)}(\mu, \nu)} = \frac{H^d_{E'(q)}(\mu, \nu)}{Z^d_{E'(q)}},
\ee
and view the corresponding $\tau$-function
\be
\frac{\tau^{(E'(q), \beta)} ({\bf t}, {\bf s})}{Z^d_{E'(q)}} =
\sum_{d=0}^\infty \beta^d \sum_{\substack{\mu, \nu \\ \abs{\mu}= \abs{\nu}}}
\expectation{H^d_{E'(q)}(\mu, \nu)} p_\mu({\bf t}) p_\nu({\bf s}),
\ee
as a generating function for these expectation values.

\subsection{Formulae for \texorpdfstring{$H^{n-1}_{E(q)}(n)$, $H^{n-1}_{H(q)}(n)$, $H^{n-1}_{E'(q)}(n)$}{H(n) for E(q), H(q), E'(q)} for \texorpdfstring{$n=2, 3, 4, 5$}{n = 2, 3, 4, 5}}
\label{quantum_formulae_low_dim}
The computation of the quantum double Hurwitz numbers $H^d_{E(q)}(\mu, \nu)$,  $H^d_{H(q)}(\mu, \nu)$
and  $H^d_{E'(q)}(\mu, \nu)$ follows from the $S_n$  character tables. Here, we give  explicit expressions
for  the cases $n=2,3, 4$ and $5$  when  $d=n-1$;  i.e. for branched coverings of genus $0$, in the case when 
$\mu = 1^n$, so this is a point with no ramification, while $\nu= (n)$,  corresponding to a branch point 
with maximal ramification.  We denote  the three quantum Hurwitz numbers $H_G^{(n-1)}((1)^n, n)$
for $G = E(q), H(q)$ and $E'(q)$ as $H_G^{(n-1)}( n)$, since there is only one fixed branched point 
of profile type $\mu = (n)$.
Equivalently, these enumerate  the quantum weighted  factorizations
of a full cyclic element $h\in \cyc(n)$ as a product of $n-1$ transpositions. The character tables may
be found, e.g., in \cite{FH};  we just list the relevant  Hurwitz numbers $H(\nu^{(1)}, \dots, \nu^{(k)})$
that contribute to the sums in \eqref{Hd_Eq}, \eqref{Hd_Hq} and \eqref{Hd_E'q}

For $n=2$, the only term contributing  is:
\be
H((2),( 2)) = \frac{1}{2}.
\ee

For $n=3$, the  terms contributing are:
\be
H((3),( 3)) = \frac{1}{3}, \quad H((21), (21), (3)) = 1.
\ee

For $n=4$, the  terms contributing are:
\bea
H((4),( 4)) &\& = \frac{1}{4}, \quad H((31), (2 1^2), (4)) = 1, \cr
&\& \cr
 H((22), (2 1^2), (4)) &\& = \frac{1}{2},
\quad H((21^2), (21^2), (21^2), (4)) = 4.
\eea

For $n=5$, the  terms contributing are:
\bea
H((5),( 5)) &\& = \frac{1}{5}, \quad H((3 1^2), (3 1^2), (5)) = 1, \quad 
 H((3 1^2), (2 1^3), (2 1^3), (5)) = 5,
  \cr
 &\&
   \cr
H((3 2) (2 1^3), (5)) &\& = 1,  \quad  H((4 1) (21^3), (5)) = 1, 
\quad  H((2^2 1), (2  1^3), (2 1^3), (5)) = 5, \cr
 &\& 
  \cr
 H((2 1^3),  (2 1^3), (2 1^3), (2 1^3), (5)) &\& = 25, \quad H((2^2 1), (2^2 1), (5)) = 1, \quad 
 H((2^2 1),  (3 1^2), (5)) = 1. \cr
 &\&
  \eea

From eqs.~\eqref{W_E_q}--\eqref{W_H_q}, \eqref{Hd_Eq}, \eqref{Hd_Hq} and  \eqref{Hd_E'q}, 
we obtain the following expressions for  $ H_{E(q)}^{(n-1)}(n)$, $ H_{H(q)}^{(n-1)}(n)$  
 and $ H_{E'(q)}^{(n-1)}(n)$ for $n=2, 3, 4, 5$: \hfill
\break
\bigskip

\hspace{-2.5cm}
\begin{tabular}{ |p{.3 cm}||p{5.8 cm}|p{5.8 cm}|p{6.2 cm}|  }
 \hline
 \multicolumn{4}{|c|}{Quantum Hurwitz numbers } \\
 \hline
   $n$&  \qquad \qquad $H^{n-1}_{E(q)}(n)$ & \qquad \qquad $H^{n-1}_{H(q)}(n)$ &\qquad \qquad $H^{n-1}_{E'(q)}(n)$ \\
  \hline
      &    &  &   \\
$2$ &\qquad \qquad  \quad $\frac{1}{2(1-q)}$ &\qquad   \qquad  \ $ \frac{1}{2(1-q)} $&
   \qquad\qquad  \quad$\frac{q}{2(1-q)}$\\
     &    &  &   \\
$3$  & \qquad \qquad  $\frac{1+5q}{3(1-q)(1-q^2)}$   & \qquad \quad \  $\frac{5+q}{3(1-q)(1-q^2)}$
 &  \qquad \qquad  $\frac{q^2+5q^3}{3(1-q)(1-q^2)}$\\
     &    &  &   \\
  $4$    & \qquad \quad  $\frac{1+11q+11q^2+q^3}{4(1-q)(1-q^2)(1-q^3)} $ 
     &  \qquad \  $\frac{1+11q+11q^2+q^3}{4(1-q)(1-q^2)(1-q^3)}$
  &\qquad \quad  
$\frac{q^3+11q^4+11q^5+q^6}{4(1-q)(1-q^2)(1-q^3)} $ \\
  &    &  &   \\
$ 5$ &   $\frac{1+19q+39q^2+260q^3+261q^4+241q^5+2179q^6}{5(1-q)(1-q^2)(1-q^3)(1-q^4)}$   
 &   $\frac{2179+241q+261q^2+260q^3+39q^4+19q^5+q^6}{5(1-q)(1-q^2)(1-q^3)(1-q^4)}$ 
  &    $\frac{q^4+19q^5+39q^6+260q^7+261q^8+241q^9+2179q^{10}}{5(1-q)(1-q^2)(1-q^3)(1-q^4)}$   \\
     &    &  &   \\

\hline
 \end{tabular}

\bigskip
\br
By combining the cases  $E(q)$, $H(q)$ multiplicatively, a multiparametric family of generating functions may be obtained, for which the underlying weight generating function is the product
\be
Q({\bf q},  {\bf z}; {\bf p}, {\bf w}) \deq \prod_{i=1}^l E(q_i, w_i) \prod_{j=1}^m H(p_j, z_j).
\ee
The interpretation of such multiparametric  multispecies quantum Hurwitz numbers, both in terms of weighted enumeration of branched covers, and weighted paths in the Cayley graph, is the subject of \cite{H1}.
\er

\br
Since the completion of this work, the above approach to weighted Hurwitz numbers and their generating functions 
has been developed further in \cite {ACEH1}, \cite{ACEH2} so as to include it in the topological recursion program  \cite{EO1, EO2}. The parameter $\beta$ plays a r\^ole analogous to Planck's constant in the asymptotic development of the associated solutions  to the {\em quantum spectral curve equation} as WKB series.
\er

\bigskip
\bigskip 
\noindent
{\small \it Acknowledgements.} The authors would like to thank Marco Bertola, Gaetan Borot,
and Alexander Orlov for helpful discussions. Thanks also to Dennis Stanton for suggesting  comparison 
with the analog \cite{LRS} of Singer cycles in $\GL_n(\mathbb{F}_q)$,   which motivated computing
the explicit formulae of \autoref{quantum_formulae_low_dim}.

Work supported by the Natural Sciences and Engineering Research Council of Canada (NSERC) and the Fonds de recherche du Qu\'ebec -- Nature et technologies (FRQNT).

\bigskip


\newcommand{\arxiv}[1]{\href{http://arxiv.org/abs/#1}{arXiv:{#1}}}

\end{document}